   \def\@citecolor{blue}%
   \def\@urlcolor{blue}%
   \def\@linkcolor{blue}%
\def\orcidID#1{\smash{\href{http://orcid.org/#1}{\protect\raisebox{-1.25pt}{\protect\includegraphics{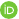}}}}}
\begin{document}


\title{On the Expressiveness of B\"uchi Arithmetic}


\author{Christoph Haase\inst{1}\thanks{Parts of this research were
    carried out while the first author was affiliated with the
    Department of Computer Science, University College London,
    UK.}\orcidID{0000-0002-5452-936X} (\Letter) \and Jakub
R{\'o}{\.{z}}ycki\inst{2}} \authorrunning{C. Haase and
    J. R{\'o}{\.{z}}ycki}

\institute{Department of Computer Science, University of Oxford, Oxford, UK
  \email{christoph.haase@cs.ox.ac.uk}
  \and Institute of Mathematics, University of Warsaw, Warsaw, Poland
}




\maketitle

\begin{abstract}
  We show that the existential fragment of B\"uchi arithmetic is
  strictly less expressive than full B\"uchi arithmetic of any base,
  and moreover establish that its $\Sigma_2$-fragment is already
  expressively complete. Furthermore, we show that regular languages
  of polynomial growth are definable in the existential fragment of
  B\"uchi arithmetic.
\end{abstract}

\keywords{
logical theories \and
logical definability \and
quantifier elimination \and
automatic structures \and
regular languages}

\section{Introduction}

This paper studies the expressive power of B\"uchi arithmetic, an
extension of Presburger arithmetic, the first-order theory of the
structure $\langle \N, 0, 1, + \rangle$. B\"uchi arithmetic
additionally allows for expressing restricted divisibility properties
while retaining decidability. Given an integer $p\ge 2$, \emph{B\"uchi
  arithmetic of base $p$} is the first-order theory of the structure
$\langle \N,0,1,+,V_p \rangle$, where $V_p$ is a binary predicate such
that $V_p(a,b)$ holds if and only if $a$ is the largest power of $p$
dividing $b$ without remainder, i.e., $a=p^k$, $a \mid b$ and $p \cdot
a \nmid b$.

Presburger arithmetic admits quantifier-elimination in the extended
structure $\langle \N,0,1,+,\{c|\cdot \}_{c>1}\rangle$ additionally
consisting of unary divisibility predicates $c|\cdot$ for every
$c>1$~\cite{Pre29}. It follows that the existential fragment of
Presburger arithmetic is expressively complete, since any predicate
$c|\cdot$ can be expressed using an additional existentially
quantified variable. We study the analogous question for B\"uchi
arithmetic and show, as the main result of this paper, that its
existential fragment is, in any base, strictly less expressive than
full B\"uchi arithmetic. Notably, this result implies that there does
not exist a quantifier-elimination result \emph{\`a la} Presburger for
B\"uchi arithmetic, i.e., any extension of B\"uchi arithmetic with
additional predicates definable in existential B\"uchi arithmetic does
not admit quantifier elimination.

A central result about B\"uchi arithmetic is that it is an automatic
structure: a set $M \subseteq \N^n$ is definable in B\"uchi arithmetic
of base $p$ if and only if $M$ is recognizable by a finite-state
automaton under a base $p$ encoding of the natural
numbers. Equivalently, $M$ is \emph{$p$-regular}. This result was
first stated by B\"uchi~\cite{B60}, albeit in an incorrect form, and
later correctly stated and proved by Bruy\`ere~\cite{B85}, see
also~\cite{BHMV94}. Villemaire showed that the $\Sigma_3$-fragment of
B\"uchi arithmetic is expressively complete~\cite[Cor.~2.4]{Vil92}. He
established this result by showing how to construct a
$\Sigma_3$-formula defining the language of a given finite-state
automaton. We observe that Villemaire's construction can actually be
improved to a $\Sigma_2$-formula and thus obtain a full
characterization of the expressive power of B\"uchi arithmetic in
terms of the number of quantifier alternations.

Our approach to separating the expressiveness of existential B\"uchi
arithmetic from full B\"uchi arithmetic in base $p$ is based on a
counting argument. Given a set $M\subseteq \N$, define the counting
function $d_M(n) := \# (M \cap \{ p^{n-1}, \ldots, p^{n}-1 \})$ which
counts the numbers of bit-length $n$ in base $p$ in $M$. If $M$ is
definable in existential B\"uchi arithmetic of base $p$, we show that
$d_M$ is either $O(n^c)$ for some $c\ge 0$, or at least $c \cdot p^n$
for some constant $c>0$ and infinitely many $n\in \N$. Since, for
instance, for $M_p\subseteq \N$ defined as the set of numbers with
$p$-ary expansion in the regular language $\{10, 01\}^*$, we have
$d_{M_p}(n)= \Theta(2^{n/2})$, and hence $M_p$ is not definable in
existential B\"uchi arithmetic of base $p$. However, $M_p$ being
$p$-regular implies that $M_p$ is definable by a $\Sigma_2$-formula of
B\"uchi arithmetic of base $p$.

We also show that existential B\"uchi arithmetic defines all regular
languages of polynomial density, encoded as sets of integers. Given a
language $L\subseteq \Sigma^*$, let the counting function $d_L\colon
\N \to \N$ be such that $d_L(n) := \#(L\cap \Sigma^n)$. Szilard et
al.~\cite{SYZS92} say that $L$ has \emph{polynomial density} whenever
$d_L(n)$ is $O(n^c)$ for some non-negative integer $c$. If moreover
$L$ is regular then Szilard et al. show that $L$ is represented as a
finite union of regular expressions of the form $v_0 w_1^* v_{1}
\cdots w_{k}^* v_k$ such that $0\le k \le c+1$, $v_0,w_1,v_1, \ldots,
v_k, w_k \in \Sigma^*$~\cite[Thm.~3]{SYZS92}. We show that existential
B\"uchi arithmetic defines any language represented by a regular
expression $v_0 w_1^* v_1 \cdots w_k^* v_k$, which implies that
existential B\"uchi arithmetic defines all regular languages of
polynomial density.

\section{Preliminaries}\label{sec:preliminaries}

Given $\vec v=(v_1,\ldots,v_d)\in \Z^d$, we denote by $\norm{\vec
  v}_\infty$ the maximum norm of $\vec v$, i.e., $\norm{\vec
  v}_\infty=\max\{ |v_1|, \ldots, |v_d| \}$. For a matrix $\mat A \in
\Z^{m\times d}$ with entries $a_{i,j}$, $1\le i\le m$, $1\le j \le d$,
we denote by $\norm{\mat A}_{1,\infty}$ the one-infinity norm of $\mat
A$, i.e., $\norm{\mat A}_{1,\infty}=\max \{ |a_{i,1}| + \cdots +
|a_{i,d}| : 1\le i \le m\}$.

Let $\Sigma$ be an alphabet and $w\in \Sigma^*$, we denote by $|w|$
the length of $w$. Given a set $U\subseteq \N$, we denote by $w^U :=
\{ w^u : u \in U \}$. Thus, for example, $w^*=w^{\N}$.

For an integer $p\ge 2$, let $\Sigma_p := \{ 0, \ldots, p-1\}$. We
view words over $\Sigma_p$ as numbers encoded in $p$-ary
most-significant bit first encoding. Tuples of numbers of dimension
$n$ can be encoded as words over the alphabet $\Sigma_p^n$. For
$w=\vec v_m\cdots \vec v_0 \in (\Sigma_p^n)^{m+1}$, we denote by
$\eval w_p \in \N^n$ the $n$-tuple
\[
\eval w_p := \sum_{i=0}^m \vec v_i \cdot p^i\,.
\]
We furthermore define $\eval{\varepsilon}_p := 0$. Note that $\eval \cdot_p$
is not injective since, e.g., $01$ and $001$ both encode the number
one. Given $L \subseteq (\Sigma_p^n)^*$, we define
\[
\eval L_p := \left \{ \eval w_p : w \in L \right \} \subseteq \N^n\,.
\]

\paragraph{Automata.}
A \emph{deterministic automaton} is a tuple
$A=(Q,\Sigma,\delta,q_0,F)$, where
\begin{itemize}
  \item $Q$ is a set of \emph{states},
  \item $\Sigma$ is a finite alphabet,
  \item $\delta\colon Q \times \Sigma \to Q \cup \{ \bot \}$, where
    $\bot\not\in Q$, is the \emph{transition function},
  \item $q_0\in Q$ is the \emph{initial state}, and
  \item $F \subseteq Q$ is the set of \emph{final states}.
\end{itemize}
For states $q,r\in Q$ and $u\in \Sigma$, we write $q \xrightarrow{u}
r$ if $\delta(q,u)=r$, and extend $\xrightarrow{}$ inductively to
words by stipulating, for $w\in \Sigma^*$ and $u\in \Sigma$, that
$q\xrightarrow{w \cdot u} r$ if there is $s\in Q$ such that
$q\xrightarrow{w} s \xrightarrow{u} r$. The \emph{language of $A$} is
defined as $L(A) = \{ w\in \Sigma^* : q_0 \xrightarrow{w} q_f, q_f\in
F \}$.


Note that \emph{a priori} we allow automata to have infinitely many
states and to have partially defined transition functions (due to the
presence of $\bot$ in the co-domain of $\delta$). If $Q$ is finite
then we call $A$ a \emph{deterministic finite automaton (DFA)}, and if
in addition $\Sigma=\Sigma_p^n$ for some $p\ge 2$ and $n\ge 1$ then
$A$ is called a \emph{$p$-automaton}. Throughout this paper, we
assume, without loss of generality, that all states of a DFA are live,
i.e., every state is reachable from the initial state and can reach an
accepting state.

\paragraph{Arithmetic theories.}

As stated in the introduction, Presburger arithmetic is the
first-order theory of the structure $\langle \N,0,1,+ \rangle$, and
B\"uchi arithmetic of base $p$ the first-order theory of the extended
structure $\langle \N,0,1, +, V_p \rangle$. We write atomic formulas
of Presburger arithmetic as $\vec a \cdot \vec x = c$, where $\vec a =
(a_1,\ldots,a_d)^\intercal$ with $a_i\in \Z$, $c\in \Z$, and $\vec x =
(x_1,\ldots,x_d)$ is a vector of unknowns. In B\"uchi arithmetic we
additionally have atomic formulas $V_p(x,y)$ for the unknowns $x$ and
$y$. For technical convenience, we assert that $V_p(x,0)$ never
holds.\footnote{Other conventions are possible, e.g., asserting that
  $V_p(x,0)$ holds if and only if $x=1$ as in~\cite{BHMV94}, but this
  does not change the sets of numbers definable in B\"uchi
  arithmetic.} We write $\Phi(x)$ or $\Phi(\vec x)$ to indicate that
$x$ or a vector of unknowns $\vec x$ occurs free in $\Phi$. If there
are further free variables in $\Phi$, we assume them to be implicitly
existentially quantified.

We may without loss of generality assume that no negation symbol
occurs in a formula of B\"uchi arithmetic. First, we have $\neg(\vec a
\cdot \vec x= c) \equiv \vec a\cdot \vec x \le c-1 \vee \vec a \cdot
\vec x \ge c+1$, and the order relation $\le$ can easily be expressed
by introducing an additionally existentially quantified
variable. Moreover, we have
\[
\neg V_p(x,y) \equiv y = 0 \vee \exists z\colon V_p(z,y) \wedge \neg(x=z)\,.
\]
Finally, $P_p(x) := V_p(x,x)$ denotes the macro asserting that $x$ is
a power of $p$.

Given a formula $\Phi(\vec x)$ of B\"uchi arithmetic of base $p$, we
define
\[
\eval{\Phi(\vec x)}_p := \left\{ \vec m\in \N^d : \Phi[\vec m/\vec x]
\text{ is valid} \right \},
\]
where, for $\vec m=(m_1,\ldots,m_d)$ and $\vec x = (x_1,\ldots,x_d)$,
$\Phi[\vec m/\vec x]$ is the formula obtained from replacing every
$x_i$ by $m_i$ in $\Phi$. The set of sets of numbers definable in
Presburger arithmetic is denoted by
\[
\mathbf{PA}:=\{ \eval{\Phi(x)} : \Phi(x) \text{ is a formula of
  Presburger arithmetic} \}\,.
\]
Analogously, we define the sets of numbers definable in fragments of
B\"uchi arithmetic of base $p$ with a fixed number of
quantifier-alternations as
\[
\Sigma_i\text{-}\mathbf{BA}_p := \left \{ \eval{\Phi(x)}_p : \Phi(x)
\text{ is a } \Sigma_i\text{-formula of B\"uchi arithmetic of base } p
\right\}\,.
\]
Finally, $\mathbf{BA}_p := \bigcup_{i\ge 1} \Sigma_i$-$\mathbf{BA}_p$
denotes the sets of numbers definable in B\"uchi arithmetic of base
$p$.

For separating existential B\"uchi arithmetic from full B\"uchi
arithmetic, we employ some tools from enumerative combinatorics.  As
defined in~~\cite{Woods14}, a formula of \emph{parametric Presburger
  arithmetic} with parameter $t$ is a formula of Presburger arithmetic
$\Phi_t$ in which atomic formulas are of the form $\vec a \cdot \vec x
= c(t)$, where $c(t)$ is a univariate polynomial with indeterminate
$t$ and coefficients in $\Z$. For $n\in \N$, we denote by $\Phi_n$ the
formula of Presburger arithmetic obtained from replacing $c(t)$ in
every atomic formula of $\Phi_t$ by the value of $c(n)$. We associate
to a formula $\Phi_t(\vec x)$ the counting function $\#\Phi_t(\vec x)
\colon \N \to \N \cup \{ \infty \}$ such that
\[
\#\Phi_t(\vec x)(n) := \#\eval{\Phi_n(\vec x)}.
\]
Throughout this paper, we constraint ourselves to formulas
$\Phi_t(\vec x)$ of parametric Presburger arithmetic in which $c(t)$
is the identity function and $\#\Phi_t(\vec x)(n)$ is finite for all
$n \in \N$.
\begin{definition}
  A function $f\colon \N \to \Q$ is an \emph{eventual
    quasi-polynomial} if there exist a threshold $t\in \N$ and
  polynomials $p_0,\ldots,p_{m-1}\in \Q[x]$ such that for all $n>t$,
  $f(n) = p_i(n)$ whenever $n \equiv i \bmod m$.
\end{definition}
Given an eventual quasi-polynomial $f$ with threshold $t$ and $n>t$,
we denote by $f_n$ the polynomial $p_i$ such that $n \equiv i \bmod
m$. We say that the polynomials $p_0,\ldots,p_{m-1}$ \emph{constitute}
the eventual quasi-polynomial $f$.
A result by Woods~\cite[Thm.\ 3.5(b)]{Woods14} shows that the counting
functions associated to parametric Presburger formulas as defined
above are eventual quasi-polynomial.
\begin{proposition}[Woods]\label{prop:pa-eqp}
  Let $\Phi_t(\vec x)$ be a formula of parametric Presburger
  arithmetic. Then $\#\Phi_t(\vec x)$ is an eventual quasi-polynomial.
\end{proposition}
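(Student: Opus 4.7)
The plan is to reduce the statement to counting lattice points in parametric rational polytopes and then to invoke a parametric Ehrhart-type quasi-polynomiality theorem. Because every atom of $\Phi_t(\vec x)$ has the restricted form $\vec a \cdot \vec x = t$, I would first run Cooper's quantifier-elimination procedure for Presburger arithmetic on $\Phi_t$ while treating $t$ as a symbolic constant. Each elimination step produces new inequalities and congruence atoms whose right-hand sides are affine in $t$; crucially, the moduli of the newly introduced congruence atoms are least common multiples of coefficients of the eliminated variable and therefore do not depend on $t$. After finitely many steps one obtains a quantifier-free formula $\Psi(\vec x, t)$ in which every atom is linear in $\vec x$ and $t$, with congruence moduli uniformly bounded by a constant depending only on $\Phi_t$.

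Second, I would put $\Psi$ into disjunctive normal form and, via inclusion--exclusion, rewrite the counting function $n \mapsto \#\eval{\Psi(\vec x, n)}$ as a signed finite sum of terms of the form $n \mapsto \#(P(n) \cap (\vec b + m\Z^d))$, where $P(n)$ is the parametric rational polytope cut out by the linear inequalities of a given conjunct and $\vec b + m\Z^d$ is the fixed affine sublattice of $\Z^d$ encoding that conjunct's congruence constraints. The standing hypothesis that $\#\Phi_t(\vec x)(n)$ is finite for every $n$ forces each $P(n)$ to be bounded.

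Finally, I would invoke the parametric version of Ehrhart's theorem: for a parametric polytope whose defining inequalities depend affinely on a single parameter $n$, the vertices of $P(n)$ are piecewise rational functions of $n$, and the map $n \mapsto \#(P(n) \cap \Z^d)$ agrees with a quasi-polynomial on each chamber of the resulting chamber decomposition of $\N$. Intersecting with a fixed residue class only refines the period of this quasi-polynomial. Since $\N$ admits only finitely many chambers, a single quasi-polynomial captures the count on the unique unbounded chamber, i.e., for all $n$ beyond some threshold, and summing the signed contributions yields the desired eventual quasi-polynomial $\#\Phi_t(\vec x)$. The main obstacle I foresee is the careful bookkeeping in parametric Cooper elimination required to certify that both the linear-in-$t$ form of atoms and the uniform bound on congruence moduli persist through iterated elimination steps; once this is in place, the Ehrhart-theoretic step is comparatively routine.
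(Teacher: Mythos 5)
The paper offers no proof of this proposition: it is imported verbatim as Theorem~3.5(b) of Woods~\cite{Woods14} and used as a black box, so there is no internal argument to compare yours against. Your sketch is instead a plausible self-contained proof of the special case the paper actually needs (where $c(t)$ is the identity, as the paper stipulates just before the statement), and it follows the classical route: parametric Cooper elimination to get a quantifier-free formula whose atoms are linear in $\vec x$ and $t$ with $t$-independent congruence moduli, then disjunctive normal form and inclusion--exclusion, then a one-parameter Ehrhart/chamber-decomposition argument. This is coherent, and the key observation --- that the moduli introduced by Cooper's procedure come from least common multiples of variable coefficients and hence never involve $t$ --- is correct because $t$ only ever occurs in the constant terms of atoms. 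Woods' own proof is more general (it must handle arbitrary polynomials $c(t)$, for which the right-hand sides of the resulting parametric polytopes are polynomial rather than affine in $n$), so your argument buys simplicity at the cost of generality, which is acceptable here.

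Two points in your sketch deserve tightening. First, the congruence atoms surviving elimination have the form $m \mid (\vec a\cdot\vec x + bt + c)$, so the affine sublattice you intersect with is not literally fixed: which cosets of $m\Z^d$ are selected depends on $n \bmod m$. This is harmless --- it contributes only a further period $m$ to the quasi-polynomial --- but as written the reduction to ``a fixed affine sublattice'' is not quite what you get. Second, finiteness of $\#\Phi_t(\vec x)(n)$ does not directly force each $P(n)$ to be bounded; it forces that whenever $P(n)$ meets the relevant cosets, the recession cone of $P(n)$ (which depends only on the coefficient matrix, not on $n$) must be trivial, since otherwise one could translate a lattice point along an integer recession direction scaled by $m$. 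You should argue this dichotomy explicitly: either the recession cone is trivial and every $P(n)$ is a polytope, or every term of that conjunct contributes zero. With those two repairs the plan goes through.
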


\paragraph{Semi-linear sets.}

A result by Ginsburg and Spanier establishes that the sets of numbers
definable in Presburger arithmetic are semi-linear sets~\cite{GS64}. A
\emph{linear set} in dimension $d$ is given by a base vector $\vec b
\in \N^d$ and a finite set of period vectors $P=\{ \vec p_1,\ldots,
\vec p_n \}\subseteq \N^d$ and defines the set
\[
L(\vec b,P) := \left\{ \vec b + \lambda_1 \cdot \vec p_1 + \cdots +
\lambda_n \cdot \vec p_n : \lambda_i \in \N, 1\le i\le n \right \}.
\]
A \emph{semi-linear set} is a finite union of linear sets. For a
finite $B \subseteq \N^d$, we write $L(B,P)$ for $\bigcup_{\vec b \in
  B} L(\vec b, P)$. Semi-linear sets of the form $L(B,P)$ are called
hybrid linear sets in~\cite{CH16}, and it is known that the set of
non-negative integer solutions of a system of linear Diophantine
inequalities $S\colon \mat A \cdot \vec x \ge \vec c$ is a hybrid
linear set~\cite{CH16}.

Semi-linear sets in dimension one are also known as \emph{ultimately
  periodic sets}. In this paper, we represent an ultimately periodic
set as a four-tuple $U=(t,\ell,B,R)$, where $t\ge 0$ is a
\emph{threshold}, $\ell>0$ is a \emph{period}, $B\subseteq \{0,\ldots,
t-1\}$ and $R\subseteq \{0,\ldots,\ell-1\}$, and $U$ defines the set
\[
\eval{U} := B \cup \{ t + r + \ell \cdot i : r\in R, i\ge 0 \}\,.
\]

\section{The inexpressiveness of  existential B\"uchi arithmetic}

We now establish the main result of this paper and show that the
existential fragment of B\"uchi arithmetic is strictly less expressive
than general B\"uchi arithmetic.
\begin{theorem}\label{thm:main}
  For any base $p\ge 2$, $\Sigma_1$-$\mathbf{BA}_p \neq \mathbf{BA}_p$. In
  particular, there exists a fixed regular language $L \subseteq
  \{0,1\}^*$ such that $\eval L_p\in \mathbf{BA}_p \setminus
  \Sigma_1$-$\mathbf{BA}_p$ for every base $p\ge 2$.
\end{theorem}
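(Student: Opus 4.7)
The plan is to separate the two classes via a counting dichotomy for the bit-length density $d_M(n) := \#(M \cap \{p^{n-1}, \ldots, p^n-1\})$. The intermediate lemma I would establish is: for every $M \in \Sigma_1\text{-}\mathbf{BA}_p$, either $d_M(n) = O(n^c)$ for some $c \geq 0$, or $d_M(n) \geq c' \cdot p^n$ for some $c' > 0$ and infinitely many $n$. The witness for the separation is then the fixed regular language $L := (10 \mid 01)^*$, with $M_p := \eval L_p$. A direct count yields $d_{M_p}(n) = \Theta(2^{n/2})$ along even $n$, which sits strictly between polynomial and $\Omega(p^n)$ for every $p \geq 2$, so $M_p \notin \Sigma_1\text{-}\mathbf{BA}_p$. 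Conversely, $L$ is regular, hence $M_p$ is $p$-regular and lies in $\mathbf{BA}_p$ (in fact in $\Sigma_2\text{-}\mathbf{BA}_p$, by the paper's improvement of Villemaire's construction).

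To prove the dichotomy, I would push $\Phi$ into disjunctive normal form over linear (in)equalities and $V_p$-atoms; negation is eliminable by the preliminaries. It then suffices to bound $d_M(n)$ clause by clause. A single clause describes tuples $(x, \vec y)$ satisfying a Presburger system together with a fixed finite family of atoms $V_p(y_i, t_i)$ that force each such $y_i$ to be a power of $p$, say $y_i = p^{e_i}$, and impose a one-digit condition on $t_i$ at position $e_i$. Because the number of atoms is bounded by the formula, only $O(1)$ anchored exponents are available; yet the remaining variable $x$ is constrained by linear relations coupling it to these powers of $p$ together with an otherwise Presburger-definable offset.

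The main technical step is then an ``all-or-nothing'' structural analysis on each clause: either the linear couplings force $x$ to be essentially a bounded linear combination of the anchored powers $p^{e_i}$ plus a semi-linear offset, in which case the elements of bit-length $n$ in $M$ are parametrized by $O(1)$ exponents bounded by $n$, so I would encode $d_M(n)$ as the counting function of a parametric Presburger formula with parameter $n$ and invoke Woods' theorem (Proposition~\ref{prop:pa-eqp}) to conclude that $d_M$ is eventual quasi-polynomial and hence $O(n^c)$; or else the $V_p$-atoms do not meaningfully restrict $x$, the clause contains an ultimately periodic subset of positive asymptotic density, and $d_M(n) \geq c' \cdot p^n$ follows from a standard semi-linear counting argument applied to each of the finitely many surviving clauses.

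The main obstacle is the reduction to parametric Presburger in the polynomial regime: the anchored exponents $e_i$ and the total bit-length $n$ must be expressed simultaneously through the single parameter $t$ of Woods' framework, so that the counting function of the resulting parametric formula agrees with $d_M(n)$ for all sufficiently large $n$. This forces a careful reparametrization that respects the syntactic restriction imposed in the preliminaries (only the identity polynomial $c(t) = t$ as coefficient). The exponential regime and the explicit density computation for $(10 \mid 01)^*$ are by comparison routine.
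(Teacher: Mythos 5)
Your overall strategy coincides with the paper's: both prove the dichotomy that any $M$ definable in existential B\"uchi arithmetic of base $p$ satisfies $d_M(n)=O(n^c)$ or $d_M(n)\ge c'\cdot p^n$ for infinitely many $n$, and both use $(10\mid 01)^*$ as the witness whose density sits strictly in between. The gap is in the proof of the dichotomy itself. Your ``all-or-nothing'' case distinction --- either the linear couplings force $x$ to be essentially a bounded combination of anchored powers of $p$ plus a semi-linear offset, or the $V_p$-atoms ``do not meaningfully restrict $x$'' and the clause contains a positive-density ultimately periodic set --- is not a proof step but a restatement of what has to be proved: you give no argument that these two cases are exhaustive, and ruling out intermediate densities is exactly the hard part. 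Moreover, the picture of a clause as ``$O(1)$ anchored exponents plus a Presburger offset'' underestimates the structure: the second arguments $t_i$ of the $V_p$-atoms are numbers with only a single digit pinned down, and their interaction with the linear system is what can produce exponential-density sets from a single clause; it is not captured by a bounded number of exponent parameters.

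The paper resolves this by leaving the formula and working with an automaton: each clause is compiled into the Wolper--Boigelot $p$-automaton for $\mat A\cdot\vec x=\vec c$, producted with two-state DFAs for the $V_p$-atoms (\cref{lem:eqv-eqp}). The reachability relation of that automaton admits an algebraic characterization (\cref{lem:reach-char}) in which the run length $n$ enters only through the value $y=p^n$; this is precisely what allows the cycle-counting functions $C_{q,x}(n)$ to be written as $f(p^n)$ for an eventual quasi-polynomial $f$ via Woods' theorem, i.e.\ it is the solution to the reparametrization obstacle you identify but leave open. The dichotomy then follows from a genuine case split on the automaton (\cref{lem:dfa-properties}): either some cycle count is unbounded, and since the constituent polynomials are necessarily linear in $p^n$ one obtains the $c\cdot p^n$ lower bound by routing through that state, or all cycle counts are uniformly bounded by a constant $d$, and the polynomial upper bound follows from decomposing every accepted word into at most $\#Q$ simple cycles joined by loop-free paths and counting the at most $n^{2\#Q}$ choices of boundary positions. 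As it stands, your proposal correctly handles the easy parts (the DNF reduction, the witness language and its density, membership in $\mathbf{BA}_p$) but leaves the central lemma unproved.
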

Given a set $M\subseteq \N$, recall that for a fixed base $p\ge 2$,
$d_M(n)$ counts the numbers of bit-length $n$ in base $p$ in $M$. As
already discussed in the introduction, we prove \cref{thm:main} by
characterizing the growth of $d_M$ for sets $M$ definable in B\"uchi
arithmetic.

For any formula $\Phi(x)$ of existential B\"uchi arithmetic in prenex
normal form, we can with no loss of generality assume that its matrix
is in disjunctive normal form, i.e., a disjunction of \emph{systems of
  linear Diophantine equations with valuation constraints}, each of
the form
\[
\mat A \cdot \vec x = \vec c \wedge \bigwedge_{i\in I} V_p(x_i,y_i), 
\]
where the $x_i$ and $y_i$ are unknowns from the vector of unknowns
$\vec x$. For $M=\eval{\Phi(x)}_p$, in order to determine the growth
of $d_M$, it suffices to determine the maximum growth occurring in any
of its systems of linear Diophantine equations with valuation
constraints in the matrix of $\Phi(x)$, which in turn can be obtained
by analyzing the growth of the number of words accepted by a
$p$-automaton defining the set of solutions of such a system.

Let $S\colon \mat A\cdot \vec x = \vec c$ be a system of linear
Diophantine equations such that, throughout this section, $\mat A$ is
an $m\times d$ integer matrix, and fix a base $p \ge 2$. Following
Wolper and Boigelot~\cite{WB00}, we define an automaton
$A:=(Q,\Sigma_p^d,\delta,\vec q_0,F)$ whose language encodes all
solutions of $S$ over the alphabet $\Sigma_p$:
\begin{itemize}    
\item $Q := \Z^m$,
\item $\delta(\vec q,\vec u) := p \cdot \vec q + \mat A \cdot \vec u$
  for all $\vec q\in Q$ and $\vec u \in \Sigma_p^d$,
\item $\vec q_0 := \vec 0$, and
\item $F := \{ \vec c \}$.
\end{itemize}
As discussed in~\cite{WB00}, see also~\cite{GHW19}, only states $\vec
q$ such that $\norm{\vec q}_\infty \le \norm{\mat A}_{1,\infty}$ and
$\norm{\vec q}_\infty \le \norm{\vec c}_\infty$ can reach the
accepting state. Hence, all words $w \in (\Sigma_p^d)^*$ such that
${\mat A \cdot \eval w} = \vec c$ only visit a finite number of states
of $A$, and to obtain the $p$-automaton $A(S)$ defining the sets of
solutions of $S$ we subsequently restrict $Q$ to only such states. The
following lemma recalls an algebraic characterization of the
reachability relation of $A(S)$ established in the proof of
Proposition~14 in \cite{GHW19}.
\begin{lemma}\label{lem:reach-char}
  Let $\vec q,\vec r \in \Z^m$ be states of $A(S)$, $w \in
  (\Sigma_{p}^d)^n$ and $\vec x = \eval w_p$. Then $\vec q
  \xrightarrow{w} \vec r$ if and only if there is $y\in \N$ such that
  \[\vec q =  \vec r \cdot y + \mat A
  \cdot \vec x,~ \norm{\vec x}_\infty < y,~ y = p^n.\]
\end{lemma}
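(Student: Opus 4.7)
The plan is to prove the equivalence by induction on the word length $n = |w|$, mechanically unfolding the recursion $\delta(\vec q, \vec u) = p \cdot \vec q + \mat A \cdot \vec u$. In the base case $n = 0$, we have $w = \varepsilon$, $\vec x = \vec 0$, and one takes $y = 1 = p^0$; both sides of the stated equivalence reduce to $\vec q = \vec r$, while the norm condition $\norm{\vec x}_\infty = 0 < 1 = y$ holds trivially.

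For the inductive step, decompose $w = \vec u \cdot w'$ with $\vec u \in \Sigma_p^d$ the leading (most significant) digit and $|w'| = n-1$. Then $\vec q \xrightarrow{w} \vec r$ holds iff there exists an intermediate state $\vec q'$ such that $\vec q \xrightarrow{\vec u} \vec q' \xrightarrow{w'} \vec r$, and by definition of $\delta$ this intermediate state must be $\vec q' = p\vec q + \mat A \vec u$. Applying the induction hypothesis to $\vec q' \xrightarrow{w'} \vec r$ yields the analogous closed-form relation between $\vec q'$ and $\vec r$ with exponent $p^{n-1}$ and vector $\vec x' = \eval{w'}_p$. Substituting the expression for $\vec q'$ into that relation and invoking the MSB-first decomposition $\eval{w}_p = p^{n-1}\vec u + \vec x'$ produces the claimed formula with exponent $y = p^n$ after collecting the $\mat A$-terms.

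The norm constraint $\norm{\vec x}_\infty < p^n$ holds automatically in the forward direction: every digit contributes at most $(p-1)\cdot p^i$ to a component, so $\norm{\eval{w}_p}_\infty \le p^n - 1$. In the converse direction, the same bound is exactly what guarantees that any $\vec x \in \N^d$ with $\norm{\vec x}_\infty < p^n$ admits a unique length-$n$ base-$p$ digit expansion as a word over $\Sigma_p^d$, supplying the required witnessing $w$ with $\eval{w}_p = \vec x$.

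I do not anticipate any genuine obstacle here. The whole argument is algebraic bookkeeping that mirrors the derivation carried out in the proof of Proposition~14 in~\cite{GHW19}, from which the authors explicitly import the statement; the only points of care are tracking the MSB-first convention when decomposing $w$ and ensuring that the norm bound is threaded through both directions of the equivalence.
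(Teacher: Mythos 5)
The paper offers no proof of this lemma at all---it is imported verbatim from the proof of Proposition~14 of~\cite{GHW19}---so your plan to re-derive it by induction on $|w|$, unfolding $\delta(\vec q,\vec u)=p\cdot\vec q+\mat A\cdot\vec u$, is the natural (essentially the only) route, and the base case and the observation that $\norm{\eval w_p}_\infty<p^{|w|}$ holds automatically are fine. (Your converse-direction discussion about reconstructing $w$ from $\vec x$ is not actually needed here, since the lemma quantifies over a given $w$ and sets $\vec x=\eval w_p$; both directions reduce to the single algebraic identity plus the automatic norm bound.)

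There is, however, a concrete problem in the inductive step: carrying out the substitution you describe does \emph{not} produce the displayed formula but its transpose. From $\vec q'=p\vec q+\mat A\vec u$ and the induction hypothesis applied to $\vec q'\xrightarrow{w'}\vec r$ one gets $\vec r=p^{n-1}\vec q'+\mat A\vec x'=p^{n}\vec q+\mat A(p^{n-1}\vec u+\vec x')=p^{n}\vec q+\mat A\eval w_p$, i.e.\ $\vec r=y\cdot\vec q+\mat A\cdot\vec x$, whereas the statement asserts $\vec q=y\cdot\vec r+\mat A\cdot\vec x$; if you instead insist on the stated orientation, the substitution yields $p\vec q+\mat A\vec u=p^{n-1}\vec r+\mat A\vec x'$, which does not collect into the claimed form at all. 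A one-line sanity check exhibits the mismatch: for $m=d=1$, $\mat A=(1)$, $p=2$, $w=11$ we have $0\xrightarrow{11}3$, and indeed $3=4\cdot 0+3$ while $0\neq 4\cdot 3+3$. So the roles of $\vec q$ and $\vec r$ in the displayed equation are swapped relative to the paper's own definition of $\delta$ (likely an artifact of a different transition convention in the cited source); your write-up glosses over this by asserting that the algebra ``produces the claimed formula.'' A correct solution must either prove the transposed identity and explicitly flag the discrepancy with the printed statement, or it does not establish the statement as given---note that the swap is invisible in the paper's only application of the lemma, which instantiates $\vec r=\vec q$. Everything else, including the remark that intermediate states of a run between two live states remain in the restricted state set, is routine.
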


Let $x$ be a distinguished variable of $\vec x$. For a word $w \in
(\Sigma_p^d)^*$ encoding solutions of $S$, denote by $\pi_x(w)$ the
word $v \in \Sigma_p^*$ obtained from projecting $w$ onto the
component of $w$ corresponding to $x$. Let $q$ be a state of a
$p$-automaton $A$, define the counting function $C_{q,x} \colon \N
\to \N$ as
\[
C_{q,x}(n) := \# \left\{ \pi_x(w) : q \xrightarrow{w} q, w\in
(\Sigma_p^d)^n \right\}.
\]
We now show that for $p$-automata arising from systems of linear
Diophantine equations, $C_{q,x}$ can be obtained from an eventual
quasi-polynomial.
\begin{lemma}\label{lem:eq-eqp}
  For the $p$-automaton $A(S)$ associated to $S\colon \mat A\cdot \vec
  x = \vec c$ with states $Q$ and all $q\in Q$, there is an eventual
  quasi-polynomial $f$ such that $C_{q,x}(n) = f(p^n)$ for all $n \in
  \N$. Moreover, for all sufficiently large $n\in \N$, $f_{p^n}$ is a
  linear polynomial.
\end{lemma}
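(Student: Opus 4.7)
The plan is to use Lemma~\ref{lem:reach-char} to recast $C_{q,x}(n)$ as a coordinate-projection count of lattice points in a $p^n$-parametrised polytope, then invoke Proposition~\ref{prop:pa-eqp} to obtain an eventual quasi-polynomial, and finally deduce linearity from the trivial bound that the projected coordinate is less than $p^n$.

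Specialising Lemma~\ref{lem:reach-char} to $\vec r = \vec q$ shows that $w \in (\Sigma_p^d)^n$ labels a loop at $\vec q$ iff $\vec x := \eval w_p$ satisfies $\mat A\vec x = (1-p^n)\vec q$ and $0 \le x_j < p^n$ for every coordinate $j$. Since the encoding $w \mapsto \eval w_p$ restricted to words of length exactly $n$ is a bijection from $(\Sigma_p^d)^n$ onto $\{0,\ldots,p^n-1\}^d$ that commutes with coordinate projection, writing $i^*$ for the index of the distinguished variable $x$ in $\vec x$ yields
\[
C_{q,x}(n) \;=\; \#\bigl\{ x_{i^*}\in\N : \exists\,(x_j)_{j\neq i^*},\ \mat A\vec x=(1-p^n)\vec q \ \text{and}\ 0 \le x_j < p^n\ \text{for every}\ j\bigr\}.
\]
The formula
\[
\Phi_t(x_{i^*}) := \exists\,(x_j)_{j\neq i^*}\colon\ \mat A\vec x=(1-t)\vec q \,\wedge\, \bigwedge_j 0 \le x_j \,\wedge\, \bigwedge_j x_j \le t-1
\]
is thus a formula of parametric Presburger arithmetic whose counting function is finite for every value of $t$ (as $x_{i^*}$ is bounded by $t$) and agrees with $C_{q,x}(n)$ at $t=p^n$. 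By Proposition~\ref{prop:pa-eqp} this counting function is an eventual quasi-polynomial $f$, and hence $C_{q,x}(n) = f(p^n)$ for every $n \in \N$.

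For the linearity claim, the constraint $0 \le x_{i^*} < t$ gives $f(t) \le t$ for all $t$. Writing the constituent polynomials of $f$ as $p_0,\ldots,p_{m-1}\in\Q[t]$, the inequality $p_j(k) \le k$ holds for all sufficiently large $k$ along the arithmetic progression $k \equiv j \pmod m$, which forces $\deg p_j \le 1$, so every $p_j$ is linear. Consequently, for every $n$ beyond the threshold of $f$ the polynomial $f_{p^n}$ coincides with one of the $p_j$ and is therefore linear, as required. The delicate step is the translation between $\pi_x$-images of loop labellings and coordinate projections of lattice-point sets, which relies on the fact that the base-$p$ encoding restricted to fixed-length words is a bijection onto $\{0,\ldots,p^n-1\}$; once this identification is made, the parametric-Presburger machinery and an elementary growth bound do the rest.
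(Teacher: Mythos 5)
Your proof is correct and follows essentially the same route as the paper: specialise \cref{lem:reach-char} to self-loops, express the loop condition as a parametric Presburger counting problem, apply \cref{prop:pa-eqp}, and derive linearity from the bound $f(t)\le t$. The only cosmetic differences are that the paper passes through a hybrid linear set representation of the solution set before writing the parametric formula, and keeps the parameter $t$ isolated in an atom $y=t$ (matching its stated convention that $c(t)$ be the identity), whereas you substitute $(1-t)\vec q$ directly into the right-hand side — which Woods' theorem still covers, or which you could normalise by reintroducing the variable $y$.
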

\begin{proof}
  Let $q=\vec q\in \Z^d$. By \cref{lem:reach-char}, $\vec q
  \xrightarrow{w} \vec q$ for $w\in (\Sigma_p^d)^n$ if and only if
  there is a $y\in \N$ such that
  \[\vec q = \vec q\cdot y + \mat A
  \cdot \vec x,~ \norm{\vec x}_\infty < y,~ y = p^n,\]
  where $\vec x = \eval{w}_p$. The set of solutions of $S'\colon \mat
  A\cdot \vec x + \vec q\cdot y = \vec q, \norm{\vec x}_\infty < y$
  is a hybrid linear set $L(D,R)\subseteq \N^{d+1}$. Let $L(B,P)
  \subseteq \N^2$ be obtained from $L(D,R)$ by projecting onto the
  components corresponding to $x$ and $y$, and assume that $x$
  corresponds to the first and $y$ to the second component of
  $L(B,P)$. Let $M_t := \N \times \{ t \}$ and \[f(t) := \# (L(B,P)
  \cap M_t)\,.\] Observe that $C_{q,x}(n) = f(p^n)$ and that $f(n)$ is
  finite for all $n\in \N$ due to the constraint $x<y$. Let $P=\{ \vec
  p_1,\ldots, \vec p_k \}$, the following formula of parametric
  Presburger arithmetic defines $L(B,P)\cap M_t$:
  \[
  \Phi_t(x,y) := \exists z_1\cdots \exists z_k\colon \bigvee_{\vec b
    \in B} \begin{pmatrix}x\\y\end{pmatrix} =
    \vec b + \sum_{i=1}^k  \vec p_i \cdot z_i \wedge y = t
  \]
  Thus, $f=\#\Phi_t(x,y)$ and, by application of \cref{prop:pa-eqp},
  $f$ is an eventual quasi-polynomial.

  Since $C_{q,x}(n) \le p^n -1$ for all $n\in \N$, we in particular
  have that all polynomials $f_{p^n}$ constituting $f$ are linear as
  they would otherwise outgrow $C_{q,x}$.\qed
\end{proof}

The next step is to lift \cref{lem:eq-eqp} to systems of linear
Diophantine equations with valuation constraints. To this end, we
define a DFA whose language encodes the set of all solutions of
predicates of the form $V_p(x,y)$. Formally, for $S\colon V_p(x,y)$ we
define $A(S) := (Q, \Sigma_p^d, \delta, q_0, F)$ such that
\begin{itemize}
\item $Q := \{ 0, 1\}$,
\item $\delta(0, \vec u) := 0$ for all $\vec u\in \Sigma_p^d$ such that
  $\pi_x(\vec u)=0$,
\item $\delta(0, \vec u) := 1$ for all $\vec u\in \Sigma_p^d$ such
  that $\pi_x(\vec u)=1$ and $\pi_y(\vec u)>0$,
\item $\delta(1, \vec u) := 1$ for all $\vec u\in \Sigma_p^d$ such
  that $\pi_x(\vec u)=\pi_y(\vec u)=0$,
\item $q_0 := 0$, and
\item $F:= \{ 1 \}$.
\end{itemize}
For $S\colon \mat A \cdot \vec x = \vec c \wedge \bigwedge_{1 \le i
  \le \ell} V_p(x_i,y_i)$, we denote by $A(S)$ the DFA that can be
obtained from the standard product construction on all DFA for the
atomic formulas of $S$. Hence, the set of states of $A(S)$ is a finite
subset of $\Z^m \times \{0,1\}^\ell$. We now show that the number of
words along a cycle of $A(S)$ can also be obtained from an eventual
quasi-polynomial.
\begin{lemma}\label{lem:eqv-eqp}
  Let $S$ be a system of linear Diophantine equations with valuation
  constraints with the associated DFA $A(S)$ with states $Q$, and let
  $q \in Q$. There is an eventual quasi-polynomial $f$ such that
  $C_{q,x}(n)=f(p^n)$. Moreover, $f_{p^n}$ is a linear polynomial for
  all $n \in \N$.
\end{lemma}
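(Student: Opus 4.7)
The strategy is to reduce \cref{lem:eqv-eqp} to \cref{lem:eq-eqp} by exploiting the rigid structure of the DFA for each predicate $V_p(x_i,y_i)$. States of $A(S)$ have the form $(\vec q,\vec s)\in\Z^m\times\{0,1\}^\ell$, with the first component tracking the equational part and each $s_i$ tracking the $i$-th valuation-constraint DFA. The key observation is that each valuation DFA admits only a self-loop at state $0$ (requiring $\pi_{x_i}=0$), a self-loop at state $1$ (requiring $\pi_{x_i}=\pi_{y_i}=0$), and no transition from $1$ back to $0$. Consequently, any cycle $(\vec q,\vec s)\xrightarrow{w}(\vec q,\vec s)$ in $A(S)$ must keep every $s_i$ fixed throughout $w$, forcing $\pi_{x_i}(w)=0^n$ for every $i$, and additionally $\pi_{y_i}(w)=0^n$ for every $i$ with $s_i=1$.

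With this observation I distinguish two cases depending on the distinguished variable $x$. If $x$ coincides with one of the coordinates forced to zero along cycles at $(\vec q,\vec s)$, then $\pi_x(w)$ is always the all-zero word (provided at least one cycle of length $n$ exists), so $C_{q,x}(n)\in\{0,1\}$ is a constant function of $p^n$ and hence trivially of the required form.

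Otherwise, I would build an auxiliary system $S'$ of linear Diophantine equations \emph{without} valuation constraints, obtained from the equational part $\mat A\cdot\vec x=\vec c$ of $S$ by appending the equations $x_i=0$ for every $i\in\{1,\dots,\ell\}$ and $y_i=0$ for every $i$ with $s_i=1$. Cycle words at $(\vec q,\vec s)$ in $A(S)$ correspond bijectively to cycle words at the state of $A(S')$ whose original coordinates are $\vec q$ and whose new coordinates (introduced by the appended equations) are $0$. Applying \cref{lem:eq-eqp} to $S'$ at this state yields an eventual quasi-polynomial $f$ with $C_{q,x}(n)=f(p^n)$, and linearity of $f_{p^n}$ follows as in \cref{lem:eq-eqp} from the trivial bound $C_{q,x}(n)\le p^n-1$.

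The main subtlety lies in justifying this bijection, which hinges on the fact that for length-$n$ words over $\Sigma_p$ the numeric constraint ``a coordinate has value $0$'' is equivalent to the letter-by-letter constraint ``every digit of that coordinate is $0$''. This equivalence is exactly what lets the zero equations added to form $S'$ faithfully encode the rigidity imposed by the valuation-constraint DFAs during a cycle, and thus the argument developed for systems without valuation constraints transfers intact.
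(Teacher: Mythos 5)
Your proposal is correct and follows essentially the same route as the paper: observe that a cycle at $(\vec q,\vec s)$ forces $\pi_{x_i}(w)=0^n$ for all $i$ and $\pi_{y_i}(w)=0^n$ whenever $s_i=1$, encode these as the extra equations $x_i=0$ and $y_i=0$ in an auxiliary valuation-free system $S'$, and invoke \cref{lem:eq-eqp}. Your separate case for when $x$ itself is a forced-to-zero coordinate is harmless but unnecessary, since \cref{lem:eq-eqp} applied to $S'$ already covers it.
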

\begin{proof}
  Let $S\colon \mat A \cdot \vec x = \vec c \wedge \bigwedge_{1 \le i
    \le \ell} V_p(x_i,y_i)$, we have $Q\subseteq \Z^m\times
  \{0,1\}^{\ell}$ and thus $q=(\vec q, b_1, \ldots, b_\ell) \in
  Q$. Any self-loop $q\xrightarrow{w}_S q$ with $q=(\vec q, b_1,
  \ldots, b_\ell)$ is a self-loop for the DFA induced by the system of
  linear Diophantine equations $\mat A \cdot \vec x = \vec c$ with the
  additional requirement that $\pi_{x_i}(\eval w_p)=0$ for all $1\le
  i\le \ell$ and furthermore $\pi_{y_i}(\eval w_p)=0$ whenever $b_i =
  1$. Thus $(\vec q, \vec 0) \xrightarrow{w}_{S'} (\vec q, \vec 0)$
  where
  \[
  S'\colon \mat A\cdot \vec x = \vec c \wedge \bigwedge_{1\le
    i\le \ell} x_i = 0 \wedge \bigwedge_{1\le i\le \ell, b_i=1}
  y_i=0\,.
  \]
  Conversely, $(\vec q, \vec 0) \xrightarrow{w}_{S'} (\vec q, \vec 0)$
  immediately gives $q \xrightarrow{w}_S q$. The statement is now an
  immediate consequence of the application of \cref{lem:eq-eqp} to
  $S'$.\qed
\end{proof}
We will from now on implicitly apply~\cref{lem:eqv-eqp}. As a first
application, we show that \cref{lem:eqv-eqp} allows us to classify the
DFA associated to a system of linear Diophantine equations with
valuation constraints.
\begin{lemma}\label{lem:dfa-properties}
  The DFA $A(S)$ associated to a system of linear Diophantine
  equations with valuation constraints $S$ with states $Q$ has either
  of the following properties:
  \begin{enumerate}[(i)]
  \item there is $q\in Q$ such that $C_{q,x}$ is an eventual
    quasi-polynomial $f$ and $f_{p^n}$ is a non-constant polynomial
    for infinitely many $n\in \N$; or
  \item there is a constant $d\ge 0$ such that $C_{q,x}(n) \le d$ for all
    $q\in Q$ and $n \in \N$.
  \end{enumerate}
\end{lemma}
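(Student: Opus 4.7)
The plan is to establish the dichotomy by applying \cref{lem:eqv-eqp} uniformly to every state of $A(S)$ and exploiting the fact that $A(S)$ is a DFA, so $Q$ is finite. For each $q \in Q$, the lemma supplies an eventual quasi-polynomial $f^{(q)}$ with $C_{q,x}(n) = f^{(q)}(p^n)$ for all $n \in \N$, and moreover each constituent polynomial of $f^{(q)}$ is linear, so it is either a non-trivial linear polynomial or a constant.

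The argument then splits on whether case (i) occurs. If there exists $q \in Q$ for which $f^{(q)}_{p^n}$ is non-constant for infinitely many $n$, we are in case (i) and done. Otherwise, for every $q \in Q$ the set $N_q := \{ n \in \N : f^{(q)}_{p^n} \text{ is non-constant}\}$ is finite. I would then show that this forces $C_{q,x}$ to be bounded: for $n$ above the threshold of $f^{(q)}$ and outside $N_q$, the value $C_{q,x}(n) = f^{(q)}_{p^n}(p^n)$ equals the constant value of the polynomial $f^{(q)}_{p^n}$, and since $f^{(q)}$ has only finitely many constituent polynomials, these constant values form a finite set. Together with the finitely many exceptional values of $n$ (either below the threshold or inside $N_q$), the function $C_{q,x}$ takes only finitely many values on $\N$ and is therefore bounded by some $d_q \in \N$.

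Finally, since $Q$ is finite, taking $d := \max_{q \in Q} d_q$ yields a single constant uniform bound as required by case (ii). I do not expect any real obstacle: the proof is essentially an unpacking of what \cref{lem:eqv-eqp} already delivers for each state, combined with the finiteness of $Q$. The only mildly subtle point is that the residues of $p^n$ modulo the period of $f^{(q)}$ might not exhaust all of $\{0, \ldots, m-1\}$, but this is harmless since any constituent polynomial that is never hit by some $p^n$ simply does not contribute to $C_{q,x}$.
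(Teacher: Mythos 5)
Your proof is correct and follows essentially the same route as the paper: apply \cref{lem:eqv-eqp} to each state, and when case (i) fails, bound $C_{q,x}$ by the finitely many constant values of the constituent polynomials together with the finitely many exceptional arguments, then take the maximum over the finite state set $Q$. The paper's proof additionally includes a short argument that property (i) excludes property (ii), but that direction is not needed to establish the disjunction as stated.
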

\begin{proof}
  Suppose $A(S)$ has Property~(i). For a contradiction, suppose $d\ge
  0$ exists. Let $f$ be the eventual quasi-polynomial from
  Property~(i). Every non-constant polynomial $f_{p^n}$ constituting
  $f$ is of the form $a \cdot x + b$ with $a>0$. As there are
  infinitely many such $n$, there is some linear polynomial $g(x)=a
  \cdot x + b$ such that $g=f_{p^n}$ for infinitely many $n\in \N$.
  Hence $g(p^n) > d$ for some sufficiently large $n\in \N$.
  
  For the converse, suppose that $A(S)$ does not have
  Property~(i). Then there are $\ell, m > 0$ such that all $f_{p^n}$
  are constant polynomials bounded by some value $m\in \N$ for all
  $n\ge \ell$, $q \in Q$ and $f=C_{q,x}$. Hence we can choose $d =
  \max(\left\{ C_{q,x}(n) : q\in Q, 0 < n \le \ell \right \} \cup \{ m
  \}).$\qed
\end{proof}

We are now in a position to prove a dichotomy of the growth of the
number of words accepted by a DFA corresponding to a system of linear
Diophantine equations with valuation constraints.

\begin{lemma}\label{lem:eqv-growth}
  Let $S$ be a fixed system of linear Diophantine equations with
  valuation constraints with the associated DFA $A(S)$. Let
  $L=\pi_x(L(A(S)))$, then either
  \begin{enumerate}[(i)]
  \item $d_L(n) \ge c \cdot p^n$ for some fixed constant $c>0$ and
    infinitely many $n\in \N$; or
  \item $d_L(n)=O(n^c)$ for some fixed constant $c \ge 0$.
  \end{enumerate}
\end{lemma}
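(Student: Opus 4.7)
The plan is to apply \cref{lem:dfa-properties} to $A(S)$. Property~(i) of that lemma will yield case~(i) of the statement here (exponential lower bound), while property~(ii) will yield case~(ii) (polynomial upper bound).

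Suppose $A(S)$ has property~(i), giving a state $q$ whose cycle-counting function $C_{q,x}$ is the evaluation at $p^n$ of an eventual quasi-polynomial $f$ with $f_{p^n}$ non-constant for infinitely many $n$. By \cref{lem:eqv-eqp}, every $f_{p^n}$ is linear, so non-constantness means $f_{p^n}(x) = a_n x + b_n$ with $a_n > 0$. The quasi-polynomial $f$ has only finitely many distinct constituents, so by pigeonhole some fixed linear $g(x) = ax + b$ with $a > 0$ equals $f_{p^n}$ for infinitely many $n$; on those $n$, $C_{q,x}(n) = a p^n + b \geq (a/2) p^n$ for all $n$ sufficiently large. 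Using liveness, fix words $u$ and $v$ with $q_0 \xrightarrow{u} q$ and $q \xrightarrow{v} q_f$ for some $q_f \in F$. Every self-loop $q \xrightarrow{w} q$ yields an accepted word $uwv$ whose projection $\pi_x(uwv) = \pi_x(u) \pi_x(w) \pi_x(v)$ is fully determined by $\pi_x(w)$, so $d_L(n + |u| + |v|) \geq C_{q,x}(n) \geq (a/2) p^n$ for infinitely many $n$. Rescaling yields $d_L(N) \geq c \cdot p^N$ for infinitely many $N$, where $c := a / (2 p^{|u|+|v|})$.

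Now suppose $A(S)$ has property~(ii), with uniform bound $C_{q,x}(n) \leq d$, and set $K := |Q|$. The crux is a path decomposition: every accepting walk $W$ in $A(S)$ of length $n$ can be written (not uniquely) as $W = c_0 \, e_1 \, c_1 \, e_2 \cdots e_r \, c_r$, where $t_0 \xrightarrow{e_1} t_1 \xrightarrow{e_2} \cdots \xrightarrow{e_r} t_r$ is a simple path in $A(S)$ from $q_0$ to some $q_f \in F$ with $r \leq K-1$, and each $c_i$ is a (possibly empty) walk cycling at $t_i$. Given such a decomposition, $\pi_x(W)$ is determined by $T$ together with the tuple $(\pi_x(c_0), \ldots, \pi_x(c_r))$. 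Since $A(S)$ has at most $K!$ simple paths, at most $\binom{n}{r} = O(n^{K-1})$ ways to choose the lengths $(m_0, \ldots, m_r)$ of the $c_i$ with $\sum_i m_i = n - r$, and, under property~(ii), at most $d^{r+1} \leq d^K$ projection tuples for each length tuple, we conclude $d_L(n) = O(n^{K-1})$.

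The main technical obstacle is justifying the path decomposition in case~(ii). Its proof proceeds by strong induction on $|W|$: if $W$ is already a simple path, we are done; otherwise we select the smallest index $k$ with $v_k = v_\ell$ for some $\ell > k$, extract the cycle from position $k$ to the largest such $\ell$, apply induction to the shorter walk obtained by deleting this cycle, and reinsert the extracted cycle either at the corresponding simple-path vertex or, if $v_k$ lies inside one of the already-attached cycles, nested within that cycle. Verifying that such a reinsertion is always possible amounts to checking that $v_k$ appears somewhere in the shorter walk's decomposition, which follows because $v_k$ is visited in the remainder at position $k$ by the choice of $k$.
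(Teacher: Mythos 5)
Your proof is correct and follows essentially the same route as the paper: both cases are driven by \cref{lem:dfa-properties}, with the exponential lower bound obtained by pumping a single state's cycle-projections between fixed access and exit words, and the polynomial upper bound obtained by decomposing accepting walks into a bounded simple path with attached closed walks and counting (constantly many paths) $\times$ (polynomially many length tuples) $\times$ ($d^{O(1)}$ projection tuples). The only cosmetic difference is the exact shape of the decomposition (the paper alternates loop-free segments with self-loops at distinct states and gets exponent $2\cdot\#Q$ instead of $K-1$), which does not change the argument.
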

\begin{proof}
  Let $A(S)$ have the set of states $Q$, initial state $q_0$ and final
  state $q_f$. The DFA $A(S)$ has one of the two properties stated in
  \cref{lem:dfa-properties}.

  If $A(S)$ has the Property~(i) of \cref{lem:dfa-properties} then
  consider $q\in Q$ such that $C_{q,x}$ is an eventual
  quasi-polynomial $f$ such that $f_{p^n}$ is non-constant for
  infinitely many $n\in \N$, and let $i_1 < i_2< \ldots \in \N$ be
  such that all $f_{p^{i_j}}$ are the same non-constant polynomial $a
  \cdot x + b$. Consider $v$ and $w$ such that $q_0 \xrightarrow{v} q
  \xrightarrow{w} q_f$. Then for all sufficiently large $j$ we have
  \[
  d_L(i_j+|v|+|w|) \ge a \cdot p^{i_j} + b \ge c \cdot p^{(i_j + |v| +
    |w|)}
  \]
  for some fixed constant $c>0$.

  Otherwise, $A(S)$ has the Property~(ii) of
  \cref{lem:dfa-properties}, and there is some fixed $d\ge 0$ such
  that $C_{q,x}(n) \le d$ for all $n\in \N$ and $q\in Q$. Every $w\in
  L$ such that $|w|=n$ can uniquely be decomposed as
  $w=v_0w_1v_1w_2\cdots w_kv_k$ for some $k\le |Q|$ such that
  \begin{equation}\label{eqn:decomposition}
    q_0 \xrightarrow{v_0} q_{a_1} \xrightarrow{w_1} q_{a_1} \xrightarrow{v_1}
    q_{a_2} \xrightarrow{w_2} q_{a_2} \xrightarrow{v_2} q_{a_3} \cdots
    \xrightarrow{w_k} q_{a_k} \xrightarrow{v_k} q_{a_{k+1}},
  \end{equation}
  where $q_{a_{k+1}} = q_f$, $q_{a_i}\neq q_{a_j}$ for all $i \neq j$
  and each $q_{a_i} \xrightarrow{v_i} q_{a_{i+1}}$ corresponds to a
  loop-free path in $A(S)$. Since $C_{q,x} \le d$, there are at most
  $d^k \le d^{(\#Q)}$ words $u \in L$ of length $n$ that have the same
  sequence of states in the decomposition of \cref{eqn:decomposition}
  at the same position where they occur in $w$. Moreover, there are at
  most ${n \choose 2k} \le {n \choose 2\cdot \#Q} \le n^{(2\cdot
    \#Q)}$ possibilities at which the states $q_{a_i}$ can appear in
  any $u \in L$ of length $n$ for any particular sequence of states in
  the decomposition of \cref{eqn:decomposition}. Finally, there are at
  most $(\#Q)^{(\#Q)}$ such sequences. We thus derive
  \[
  d_L(n) \le (\#Q)^{\#Q} \cdot n^{(2\cdot\#Q)} \cdot d^{(\#Q)} = O(n^c)
  \]
  for some constant $c\ge 0$.\qed
\end{proof}

\begin{corollary}\label{cor:eba-density}
  Let $\Phi(x)$ be a fixed formula of existential B\"uchi arithmetic
  of base $p\ge 2$. Let $M=\eval{\Phi(x)}_p$, then either:
  \begin{enumerate}[(i)]
  \item $d_M(n) \ge c \cdot p^n$ for some fixed constant $c>0$ and
    infinitely many $n \in \N$; or
  \item $d_M(n) = O(n^c)$ for some fixed constant $c \ge 0$.
  \end{enumerate}
\end{corollary}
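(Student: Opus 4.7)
The plan is a routine reduction to \cref{lem:eqv-growth} via a DNF decomposition. First, we put $\Phi(x)$ in prenex normal form with matrix in DNF, as discussed in the paragraph before \cref{lem:eq-eqp}, obtaining $\Phi(x) \equiv \exists \vec y\, \bigvee_{i=1}^N S_i(x, \vec y)$ where each $S_i$ is a system of linear Diophantine equations with valuation constraints. Setting $M_i := \pi_x(\eval{L(A(S_i))}_p)$, we have $M = \bigcup_{i=1}^N M_i$. The dichotomy is preserved under finite union: if some $M_i$ satisfies (i), then $d_M(n) \ge d_{M_i}(n) \ge c \cdot p^n$ for infinitely many $n$; if every $M_i$ satisfies (ii) with $d_{M_i}(n) = O(n^{c_i})$, then $d_M(n) \le \sum_i d_{M_i}(n) = O(n^c)$ for $c = \max_i c_i$. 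It therefore suffices to establish the dichotomy for each individual $M_i$.

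\cref{lem:eqv-growth} applied to $S_i$ gives the dichotomy on the word-density $d_{L_i}$ where $L_i := \pi_x(L(A(S_i)))$. The key work is to transfer this from the word-density $d_{L_i}$ to the number-density $d_{M_i}$. For case (ii) we observe that every $m \in M_i$ of bit-length $n$ admits a witness tuple $(m, \vec y)$ satisfying $S_i$ with $\|\vec y\|_\infty \le C \cdot (m+1)$ for a constant $C = C(S_i)$; this is a standard small-model bound on the hybrid linear set of solutions of the linear Diophantine part, combined with the fact that each valuation constraint $V_p(x_j, y_j)$ already forces $y_j < p \cdot x_j$. Hence the canonical encoding of such a tuple has length at most $n + c_i$ for a constant $c_i$, and padding it with leading all-zero tuples (legitimate since $A(S_i)$ self-loops at $\vec q_0$ under $\vec 0$) produces a length-$(n + c_i)$ word in $L(A(S_i))$ whose $\pi_x$-projection encodes $m$. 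Distinct $m$'s yield distinct such projections, so $d_{M_i}(n) \le d_{L_i}(n + c_i) = O(n^c)$ by \cref{lem:eqv-growth}(ii).

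For case (i) we inspect the proof of \cref{lem:eqv-growth}(i). It produces a state $q$, paths $q_0 \xrightarrow{v} q \xrightarrow{w} q_f$, and infinitely many cycle lengths $n$ for which the number of distinct projections $\pi_x(vuw)$, as $u$ ranges over length-$n$ cycles at $q$, is at least $a \cdot p^n + b$ with $a > 0$ fixed. The leading digit of $\pi_x(vuw)$ is the first nonzero digit of $\pi_x(v)$ (independent of $u$) whenever $\pi_x(v) \ne 0^{|v|}$, so all the encoded numbers have the same bit-length $N = n + |v| + |w| - k$, where $k$ is the number of leading zeros of $\pi_x(v)$. Case (i) forces $M_i$ to contain numbers of arbitrarily large bit-length, which in turn forces $A(S_i)$ to admit paths from $\vec q_0$ with a nonzero leading $x$-digit; a re-routing argument lets us pick $v$ to lie on such a path without affecting the cycle at $q$. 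This yields $d_{M_i}(N) \ge a \cdot p^n + b \ge c' \cdot p^N$ for a fixed $c' > 0$ and infinitely many $N$.

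The main obstacle is the word-to-number density transfer. It requires aligning the encoding length of $m$ with its bit-length, and this is where the small-model bound on witnesses for the existentially quantified variables is essential. Once both cases are established for each $M_i$, combining over the finitely many disjuncts is immediate.
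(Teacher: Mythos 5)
Your overall route is the paper's: put the matrix in DNF, apply \cref{lem:eqv-growth} to each disjunct $S_i$, and observe that the dichotomy is closed under finite unions. The paper's own proof is in fact terser than yours --- it simply says $d_{M_i}$ ``is obtained'' from \cref{lem:eqv-growth} and does not spell out the passage from the word-counting function $d_{L_i}$ (distinct $x$-projections of a given \emph{length}) to the number-counting function $d_{M_i}$ (elements of a given \emph{bit-length}). You are right that this passage deserves an argument, but the one you give for case (ii) has a genuine flaw: the claim that $V_p(x_j,y_j)$ ``already forces $y_j < p\cdot x_j$'' is backwards. $V_p(x_j,y_j)$ says that $x_j$ is the largest power of $p$ dividing $y_j$, so it forces $x_j \le y_j$ and places no upper bound on $y_j$ in terms of $x_j$ (e.g.\ $V_p(1,y)$ holds for every $y$ coprime to $p$). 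With that claim gone, your small-model bound is only justified for the linear part: in the hybrid linear set of solutions of $\mat A\cdot\vec x=\vec c$ one can zero out the multipliers of period vectors whose $x$-component vanishes, but doing so may destroy the valuation constraints, and the asserted bound $\|\vec y\|_\infty\le C\cdot(m+1)$ does not follow as stated.

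The conclusion you want (a witness word of length $n+O(1)$) is nevertheless true and is better proved directly on the automaton: take a shortest $w\in L(A(S_i))$ with $\pi_x(\eval w_p)=m$; if $|w|>n+\#Q$, the prefix of $w$ of length $|w|-n$ carries only zeros in the $x$-track, so a repeated state in the run over that prefix lets you excise a cycle without changing $\pi_x(\eval w_p)$, contradicting minimality. With this repair, $d_{M_i}(n)\le\sum_{\ell=n}^{n+\#Q}d_{L_i}(\ell)=O(n^c)$ and case (ii) goes through. For case (i), your re-routing step is both delicate and unsubstantiated (it is not clear that the productive cycle is always reachable via a path with a nonzero leading $x$-digit); it is simpler to note that distinct words of the same length encode distinct numbers, so $\sum_{k\le n+|v|+|w|}d_{M_i}(k)\ge a\cdot p^n+b$ for infinitely many $n$, and a geometric-series argument then forces $d_{M_i}(k)\ge c'\cdot p^k$ for some fixed $c'>0$ and infinitely many $k$.
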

\begin{proof}
  Without loss of generality we may assume that $\Phi(x)$ is in
  disjunctive normal form such that $\Phi(x)=\bigvee_{i\in I}
  \Phi_i(x)$ and each $\Phi_i(x)$ is a system of linear Diophantine
  equations with valuation constraints $S_i$. For
  $M_i=\eval{\Phi_i(x)}_p$, we obtain $d_{M_i}$ by application of
  \cref{lem:eqv-growth}. If there is a constant $c\ge 0$ such that
  $d_{M_i}=O(n^c)$ for all $i\in I$ then $d_M = O(n^c)$. Otherwise, if
  there is some $i\in I$ such that $d_{M_i}(n) \ge c \cdot p^n$ for
  some constant $c>0$ and infinitely many $n\in \N$ then $d_M(n) \ge
  c\cdot p^n$ for infinitely many $n\in \N$.\qed
\end{proof}

As an immediate consequence of \cref{cor:eba-density}, we obtain:
\begin{corollary}\label{cor:eba-definability}
  Let $p\ge 2$ and $M \subseteq \N$ such that $f=o(d_M)$ for any
  $f=O(n^c)$, $c\ge 0$, and $d_M = o(p^n)$. Then $M \not\in
  \Sigma_1\text{-}\mathbf{BA}_p$.
\end{corollary}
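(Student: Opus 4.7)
The plan is to read the statement as the contrapositive of \cref{cor:eba-density} and proceed by a direct argument by contradiction. Suppose towards a contradiction that $M \in \Sigma_1\text{-}\mathbf{BA}_p$; then \cref{cor:eba-density} applies and yields one of its two disjuncts on the asymptotic behaviour of $d_M$. I would then rule out each disjunct using exactly one of the two hypotheses imposed on $d_M$ in the statement.

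For the first disjunct, which asserts the existence of a constant $c > 0$ and infinitely many $n \in \N$ with $d_M(n) \ge c \cdot p^n$, I would appeal directly to the hypothesis $d_M = o(p^n)$: this hypothesis forces $d_M(n) < c \cdot p^n$ for all sufficiently large $n$, which precludes the inequality $d_M(n) \ge c \cdot p^n$ holding on any infinite set of indices. For the second disjunct, which gives $d_M(n) = O(n^c)$ for some fixed constant $c \ge 0$, I would instantiate the hypothesis ``$f = o(d_M)$ for any $f = O(n^c)$'' with the witness $f := d_M$ itself; this yields $d_M = o(d_M)$, an immediate absurdity. Since both disjuncts are impossible, the contradiction establishes that $M \not\in \Sigma_1\text{-}\mathbf{BA}_p$.

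I do not anticipate any genuine obstacle here: the substance of the result lies in \cref{cor:eba-density}, and the present corollary merely repackages it as a clean inexpressiveness criterion that can be applied directly to concrete counting functions, such as the function satisfying $d_M(n) \le 2^{n/2}$ associated with the language $(10 \mid 01)^*$ highlighted in the introduction. The only point requiring care is quantifier-parsing: one has to notice that the universal phrase ``for any $f = O(n^c)$, $c \ge 0$'' is strong enough to include $d_M$ itself whenever case (ii) of \cref{cor:eba-density} would hold, which is precisely what produces the contradiction in that case.
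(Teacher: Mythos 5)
Your proof is correct and takes essentially the same route as the paper, which states this corollary as an immediate consequence of \cref{cor:eba-density} without writing out the details. Your contrapositive reading and the two-case refutation---using $d_M = o(p^n)$ against case~(i) and instantiating $f := d_M$ against case~(ii)---is exactly the intended derivation.
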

For any $p\ge 2$, consider $L=\{01, 10\}^* \subseteq \Sigma_p^*$ and
$M=\eval L_p$. We have $d_M(n)=\Theta(2^{n/2})$, and thus
\cref{cor:eba-definability} yields $M \not \in
\Sigma_1$-$\mathbf{BA}_p$. However, since $M$ is $p$-regular, we have $M
\in \mathbf{BA}_p$. This concludes the proof of \cref{thm:main}.

\section{Expressive completeness of the $\Sigma_2$-fragment of B\"uchi
  arithmetic}

For a regular language $L \subseteq (\Sigma_p^d)^*$ given by a DFA,
Villemaire shows in the proof of Theorem~2.2 in~\cite{Vil92} how to
construct a $\Sigma_3$-formula of B\"uchi arithmetic $\Phi_L(\vec x)$
such that $\eval{\Phi_L(\vec x)}_p=\eval{L}_p$. This construction is
modularized and relies on an existential formula $\Phi_{p,j}(x,y)$
expressing that \emph{``$x$ is a power of $p$ and the coefficient of
  this power of $p$ in the representation of $y$ in base $p$ is
  $j$''}:
\begin{multline*}
  \Phi_{p,j}(x,y) \equiv P_p(x) \land \exists t\, \exists u\,
  \exists z\colon \big( y =
  z + j\cdot x + t) \land (z < x) \land\\
  \wedge ( (V_p(u,t) \wedge x < u) \lor t = 0)\,.
\end{multline*}
The only reason why $\Phi_L(\vec x)$ in~\cite{Vil92} is a
$\Sigma_3$-formula is that $\Phi_{p,j}(x,y)$ appears in an implication
both as antecedent and as consequent inside an existential
formula. Thus, if one could additionally define $\Phi_{p,j}(x,y)$ by a
$\Pi_1$-formula then $\Phi_L(\vec x)$ immediately becomes a
$\Sigma_2$-formula. That is, however, not difficult to achieve by
defining:
\begin{multline*}
  \widetilde{\Phi}_{p,j}(x,y) := P_p(x) \land \forall s\, \forall
  t\, \forall u\, \forall z\colon \\ \Big(\neg(s = z + j\cdot x + t) \lor
  (z \geq x) \lor ( \neg V_p(u,t) \vee x \geq u) \land \neg(t =
  0))\Big) \rightarrow \neg(s = y)\,.
\end{multline*}
Note that the order relation can also be expressed by a universal
formula: $x \le y$ if and only if $\forall z\colon (y+z=x) \rightarrow (z
= 0)$. Thus, $\widetilde{\Phi}_{p,j}(x,y)$ is indeed a $\Pi_1$
formula.

Combining $\widetilde{\Phi}_{p,j}(x,y)$ with the results
in~\cite{Vil92}, we obtain that the $\Sigma_2$-fragment of B\"uchi
arithmetic is expressively complete.
\begin{theorem}
  For any base $p\ge 2$, $\Sigma_2$-$\mathbf{BA}_p=\mathbf{BA}_p$.
\end{theorem}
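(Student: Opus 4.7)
The inclusion $\Sigma_2$-$\mathbf{BA}_p \subseteq \mathbf{BA}_p$ is immediate from the definitions, so the content of the theorem lies in the reverse inclusion. My plan is to show that every set in $\mathbf{BA}_p$ admits a $\Sigma_2$-definition. By the B\"uchi--Bruy\`ere theorem every such set is $p$-regular, hence of the form $\eval{L}_p$ for some regular $L \subseteq (\Sigma_p^d)^*$ presented by a DFA. The starting point is the $\Sigma_3$-formula $\Phi_L(\vec x)$ that Villemaire constructs from such a DFA in the proof of Theorem~2.2 of~\cite{Vil92}; the task is to lower its quantifier complexity by exactly one level.

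The driving observation, highlighted in the discussion preceding the theorem, is that Villemaire's formula sits in $\Sigma_3$ only because the auxiliary predicate $\Phi_{p,j}(x,y)$ is defined there as a $\Sigma_1$-formula and is plugged into an implication as antecedent inside an outer existential block. Under negation that antecedent contributes a universal block nested inside the outer existential, and this is precisely the extra quantifier alternation that pushes the formula from $\Sigma_2$ to $\Sigma_3$. The remedy I would apply is to supply an equivalent $\Pi_2$-definition $\widetilde{\Phi}_{p,j}$ of the same predicate and substitute it for every antecedent occurrence of $\Phi_{p,j}$, leaving the consequent occurrences unchanged.

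To build $\widetilde{\Phi}_{p,j}$, I would follow the recipe displayed in the excerpt: start from the natural disjunctive expansion of $\neg \Phi_{p,j}$, negate it formally, and rewrite every order atom universally via the equivalence $x \le y \equiv \forall z\colon (y+z=x) \rightarrow z = 0$, which is valid over $\N$. The arithmetic obligation is then to verify $\widetilde{\Phi}_{p,j}(x,y) \equiv \Phi_{p,j}(x,y)$ by unfolding both sides against the definition of $V_p$. Note that the subformulas $\neg V_p$ and $\neg P_p$ occurring inside $\widetilde{\Phi}_{p,j}$, once expanded via the existential rewrites recalled in \cref{sec:preliminaries}, force $\widetilde{\Phi}_{p,j}$ to be $\Pi_2$ rather than $\Pi_1$ -- which is exactly the budget available in light of \cref{thm:main}.

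After the substitution, each implication $\widetilde{\Phi}_{p,j} \rightarrow \Phi_{p,j}$ inside Villemaire's formula becomes the disjunction of a $\Sigma_2$-formula (the negation of the $\Pi_2$-antecedent) with a $\Sigma_1$-formula (the consequent), hence itself $\Sigma_2$. Collecting the resulting quantifier blocks through the conjunctions of Villemaire's construction and then under the outer existentials yields a prenex $\Sigma_2$-formula defining $\eval{L}_p$, which proves the non-trivial inclusion. I expect the main obstacle to be clerical rather than conceptual: one has to confirm that in Villemaire's original formula every occurrence of $\Phi_{p,j}$ appears with the polarity claimed, and then verify that the inner quantifier blocks prenex to exactly one $\exists$-block followed by one $\forall$-block without any hidden further alternation.
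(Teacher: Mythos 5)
There is a genuine gap, and it lies in the two load-bearing steps of your plan: the polarity of the substitution and the quantifier bookkeeping around it. Villemaire's formula has the shape $\exists \vec y\, \big(\cdots \wedge \forall z\,(\alpha \rightarrow \beta) \wedge \cdots\big)$, where the implication expressing consistency with the transition function is universally quantified over all positions $z$ (powers of $p$), and both $\alpha$ and $\beta$ are built from $\Phi_{p,j}$. This universal position quantifier is the actual source of the $\Sigma_3$ complexity: with $\Phi_{p,j}\in\Sigma_1$ the matrix $\neg\alpha\vee\beta$ is $\Pi_1\vee\Sigma_1$, which under $\forall z$ gives $\Pi_2$ and under the outer existentials gives $\Sigma_3$. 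Your accounting ("collecting the resulting quantifier blocks through the conjunctions \ldots and then under the outer existentials") omits this $\forall z$ entirely; indeed, if the implication really sat only under conjunctions and existentials, the original formula would already prenex to $\Sigma_2$ and there would be nothing to prove. Because of the $\forall z$, a $\Sigma_2$ matrix is not good enough: one needs the matrix under $\forall z$ to be $\Pi_1$, which forces the replacement predicate to be $\Pi_1$, not merely $\Pi_2$. That is achievable: every order atom in $\widetilde{\Phi}_{p,j}$ can be written with universal quantifiers only, and the occurrence of $\neg V_p(u,t)$ sits at a negative position of the implication, so after pushing negations it reverts to the atomic $V_p(u,t)$ and never needs the existential expansion of $\neg V_p$ from the preliminaries. (The paper's closing remark that $\widetilde{\Phi}_{p,j}$ is ``$\Pi_2$'' is at odds with its own stated goal of a $\Pi_1$-definition; you appear to have taken the weaker class at face value, but the argument genuinely needs $\Pi_1$.)

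The substitution you propose is also in the wrong direction. You replace the \emph{antecedent} occurrences by the universal definition and keep the \emph{consequent} as $\Phi_{p,j}\in\Sigma_1$. But the antecedent is a negative occurrence: negating the existential definition already yields a $\Pi_1$-formula, so nothing is gained there, while the consequent --- the positive occurrence --- remains $\Sigma_1$ and, under $\forall z$, reproduces exactly the $\Pi_2$ block that caused the problem. Concretely, with a $\Pi_1$ antecedent and a $\Sigma_1$ consequent the matrix is $\Sigma_1$, so the formula is $\exists\forall z\,\Sigma_1 = \Sigma_3$; with your $\Pi_2$ antecedent it is even worse. The correct move, and the one the paper's construction supports, is the opposite: keep the $\Sigma_1$-definition $\Phi_{p,j}$ at antecedent (negative) positions and insert the $\Pi_1$-definition $\widetilde{\Phi}_{p,j}$ at consequent (positive) positions, so that the entire matrix under $\forall z$ is $\Pi_1$ and the outer existential block yields a genuine $\Sigma_2$-formula.
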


\section{Existential B\"uchi arithmetic defines regular languages of polynomial growth}
For a language $L\subseteq \Sigma^*$, Szilard et al.~\cite{SYZS92} say
that $L$ has \emph{polynomial growth} if $d_L(n)=O(n^c)$ for some
constant $c \ge 0$ and all $n\in \N$. One of the main results
of~\cite{SYZS92} is that a regular language $L$ has polynomial growth
if and only if $L$ can be represented as a finite union of regular
expressions of the form
\begin{equation}\label{eqn:poly-growth-re}
v_0 w_1^* v_{1} \cdots v_{k-1} w_{k}^* v_k\,.
\end{equation}
Denote by \[\mathbf{PREG}_p :=
\big\{ \eval L_p : L \subseteq \Sigma_p^*,~ L\text{ is a regular
  language of polynomial growth}\big\}\] the numerical encoding of all
regular languages of polynomial growth in base $p$.
We show in this section that existential B\"uchi arithmetic defines
any regular language of the form in~\cref{eqn:poly-growth-re}. This
immediately gives the following theorem.
\begin{theorem}\label{thm:preg-ba}
  For any base $p\ge 2$, $\mathbf{PREG}_p \subseteq
  \Sigma_1$-$\mathbf{BA}_p$.
\end{theorem}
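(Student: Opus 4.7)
The plan is to reduce to the case of a single regular expression $L = v_0 w_1^* v_1 \cdots w_k^* v_k$: since $\Sigma_1\text{-}\mathbf{BA}_p$ is closed under disjunction and~\cite{SYZS92} guarantees that every regular language of polynomial growth is a finite union of such expressions, this suffices. Fixing such an $L$, I set $\ell_i := |w_i|$, $m_i := |v_i|$, $a_i := \eval{w_i}_p$, $b_i := \eval{v_i}_p$, and for each choice of exponents $n_1,\ldots,n_k \in \N$ I introduce the quantities $z_i := p^{n_i \ell_i}$, the geometric sums $S_i := 1 + p^{\ell_i} + \cdots + p^{(n_i - 1)\ell_i} = (z_i - 1)/(p^{\ell_i} - 1)$, and the cumulative products $Z_i := z_i z_{i+1} \cdots z_k$ (with $Z_{k+1} := 1$). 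Grouping digit contributions of $W = v_0 w_1^{n_1} v_1 \cdots w_k^{n_k} v_k$ by offset from the right yields
\[
\eval W_p = b_k + \sum_{j=1}^k a_j\, p^{m_j + m_{j+1} + \cdots + m_k}\, S_j Z_{j+1} + \sum_{j=0}^{k-1} b_j\, p^{m_{j+1} + \cdots + m_k}\, Z_{j+1}.
\]

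The main obstacle is that this expression involves products of pairs of existentially quantified quantities (such as $S_j Z_{j+1}$, and implicitly $Z_j = z_j Z_{j+1}$), whereas Presburger arithmetic only admits multiplication by constants; unrestricted multiplication of two powers of $p$ is not an option inside B\"uchi arithmetic either, since the graph of a left-shift by a variable amount is not $p$-recognisable. The workaround is to introduce a single \emph{combined} existential variable $T_i$ in the role of $S_i Z_{i+1}$, relying on the algebraic identity
\[
(p^{\ell_i} - 1)\, S_i Z_{i+1} = (z_i - 1) Z_{i+1} = z_i Z_{i+1} - Z_{i+1} = Z_i - Z_{i+1},
\]
which recasts the quadratic link between $S_i$ and $Z_{i+1}$ as the linear constraint $(p^{\ell_i} - 1) T_i + Z_{i+1} = Z_i$ on the combined variables $Z_i, Z_{i+1}, T_i$.

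Concretely, the formula I propose introduces existential variables $Z_1, \ldots, Z_k, T_1, \ldots, T_k$, sets $Z_{k+1} := 1$, and conjoins, for each $i \in \{1, \ldots, k\}$, the atom $V_p(Z_i, Z_i)$ (i.e.\ that $Z_i$ is a power of $p$) with the linear equation $(p^{\ell_i} - 1) T_i + Z_{i+1} = Z_i$, alongside
\[
x = b_k + \sum_{j=1}^k a_j\, p^{m_j + \cdots + m_k}\, T_j + \sum_{j=0}^{k-1} b_j\, p^{m_{j+1} + \cdots + m_k}\, Z_{j+1}.
\]
This is plainly a $\Sigma_1$-formula, and its correctness reduces to the number-theoretic observation that, writing $Z_i = p^{A_i}$ and $Z_{i+1} = p^{A_{i+1}}$, a $T_i \in \N$ satisfying $(p^{\ell_i} - 1) T_i = Z_i - Z_{i+1}$ exists if and only if $A_i \geq A_{i+1}$ and $\ell_i \mid A_i - A_{i+1}$: since $\gcd(p^{\ell_i} - 1,\, p) = 1$, the divisibility $(p^{\ell_i} - 1) \mid p^{A_{i+1}}(p^{A_i - A_{i+1}} - 1)$ reduces to $(p^{\ell_i} - 1) \mid p^{A_i - A_{i+1}} - 1$, which holds iff $\ell_i \mid A_i - A_{i+1}$. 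Setting $n_i := (A_i - A_{i+1})/\ell_i$ and $z_i := p^{n_i \ell_i}$ then recovers a word $W \in L$ with $\eval W_p = x$, completing the argument.
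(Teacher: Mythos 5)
Your proof is correct, but it takes a genuinely different route from the paper's. The paper proceeds by induction on the number of starred blocks: it first builds existential definitions of $\eval{w^*}_p$ and of $\eval{w^+0^U}_p$ for ultimately periodic $U$, and then glues blocks together one at a time, using the predicates $S_U$ and $W_p$ to control the number of zeros separating consecutive blocks (these separations are governed by the ultimately periodic sets of lengths of words in the already-defined suffix language). You instead write down a single closed-form linear expression for $\eval{v_0 w_1^{n_1}\cdots w_k^{n_k}v_k}_p$ and observe that the only nonlinear quantities occurring in it, namely $S_jZ_{j+1}$ and the cumulative shifts $Z_j$, can be packaged into fresh existential variables $T_j, Z_j$ linked by the \emph{linear} constraints $(p^{\ell_j}-1)T_j+Z_{j+1}=Z_j$ together with $P_p(Z_j)$; the correctness then rests on the fact that $(p^{\ell}-1)\mid p^{A}-p^{A'}$ with $A\ge A'$ iff $\ell\mid A-A'$, which is the same arithmetic kernel as the Lohrey--Zetzsche predicate $S_\ell$ used by the paper, but deployed globally rather than block-by-block. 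Your argument is more direct and self-contained (no induction, no bookkeeping of ultimately periodic length sets, no separate treatment of leading/trailing zeros, since the shifts $Z_{j+1}$ absorb blocks $w_j\in 0^+$ automatically), at the cost of a somewhat heavier single formula; the paper's version is more modular and reuses published building blocks. The only loose ends in your write-up are degenerate cases (e.g.\ $w_i=\varepsilon$, where $p^{\ell_i}-1=0$ forces $Z_i=Z_{i+1}$ and the $T_i$-term vanishes since $a_i=0$), which are harmless but worth a sentence.
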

We first require a couple of abbreviations. Define
\[
W_p(x,y) := P_p(y) \wedge x<y \le p\cdot x,
\]
which expresses that $y$ is the smallest power of $p$ strictly greater
than $x$.

Let $\ell > 0$, Lohrey and Zetzsche introduce in~\cite{LZ20} the
predicate $S_\ell(x,y)$ which holds whenever
\[
x = p^r \text{ and } y = p^{r + \ell\cdot i} \text{ for some } i,r \ge
0\,.
\]
They show that $S_\ell(x,y)$ is definable in existential B\"uchi
arithmetic. Since $y = p^{\ell \cdot i} \cdot x$ if and only if $y
\equiv x \bmod (p^\ell-1)$, one can obtain $S_\ell$ as
\[
S_\ell(x,y) := P_p(x) \wedge P_p(y) \wedge \exists z\colon (y-x =
(p^\ell - 1) \cdot z) \wedge y\ge x\,.
\]
We slightly generalize $S_\ell$. Let $U \subseteq \N$, define the
predicate $S_U(x,y)$ to hold whenever
\[
x = p^r \text{ and } y = p^{r+u} \text{ for some } r\ge 0 \text{ and }
u\in U\,.
\]
\begin{lemma}
  For any ultimately periodic set $U\subseteq \N$, the predicate
  $S_U(x,y)$ is definable in existential B\"uchi arithmetic
\end{lemma}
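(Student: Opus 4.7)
The plan is to exploit the standard four-tuple representation $U = (t, \ell, B, R)$ of an ultimately periodic set from the preliminaries, under which $\eval{U}$ decomposes into the finite set $B$ together with the union of arithmetic progressions $\{t + r + \ell \cdot i : r \in R,\ i \ge 0\}$. This decomposition lets me split $S_U(x,y)$ into a finite disjunction indexed by $B$ and $R$, and each disjunct will be reduced either to a linear constraint or to the already-existentially-definable predicate $S_\ell$ of Lohrey and Zetzsche.

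For each $b \in B$, the conjunct ``$x = p^s$ and $y = p^{s+b}$ for some $s \ge 0$'' is equivalent to $P_p(x) \wedge y = p^b \cdot x$, which is existential since $p^b$ is a fixed integer constant and $P_p$ is existentially definable. For each $r \in R$, I would reduce the condition ``$x = p^s$ and $y = p^{s + t + r + \ell \cdot i}$ for some $s, i \ge 0$'' to $S_\ell$ by introducing an auxiliary existentially-quantified variable $z$ constrained by $z = p^{t+r} \cdot x$; whenever $x = p^s$ this yields $z = p^{s+t+r}$, and the remaining condition ``$y = p^{\ell \cdot i} \cdot z$ for some $i \ge 0$'' is exactly $S_\ell(z, y)$. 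Putting this together:
\begin{align*}
S_U(x,y) \equiv {}& \bigvee_{b \in B} \bigl(P_p(x) \wedge y = p^b \cdot x\bigr) \\
& {}\vee \bigvee_{r \in R} \exists z\colon \bigl(z = p^{t+r} \cdot x \wedge S_\ell(z, y)\bigr),
\end{align*}
which is a finite disjunction of existential formulas and therefore existential.

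I do not expect a serious obstacle. The only point that needs checking is that $S_\ell(z,y)$ forces $z$ to be a power of $p$ (which it does, via the conjunct $P_p(z)$ built into its definition), and consequently that the equation $z = p^{t+r} \cdot x$ forces $x$ to be a power of $p$ as required by the specification of $S_U$; this last implication holds because for $x \in \N$, the product $p^{t+r} \cdot x$ being a power of $p$ forces $x$ itself to be a power of $p$. Everything else is a direct assembly of pieces already established in the excerpt: the existential definability of $P_p$ and $S_\ell$, and multiplication by fixed constants within Presburger arithmetic.
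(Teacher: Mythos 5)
Your proposal is correct and matches the paper's proof essentially verbatim: the paper also defines $S_U$ as a disjunction over $B$ of $y = p^b\cdot x$ (guarded by $P_p$) together with a disjunction over $R$ of $S_\ell(p^{t+r}\cdot x, y)$, the only cosmetic difference being that you introduce an explicit existential $z$ for the term $p^{t+r}\cdot x$. Your check that $S_\ell$'s built-in $P_p$ conjunct forces $x$ to be a power of $p$ is a valid and welcome detail.
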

\begin{proof}
  Suppose that $U$ is given as $(t,\ell,B,R)$, we define
  \[
  S_U(x,y) := P_p(x) \wedge P_p(y) \wedge \bigvee_{b\in B} y =
  p^b \cdot x \vee \bigvee_{r\in R} S_{\ell}(p^{t+r} \cdot x, y)\,.
  \]\qed
\end{proof}

Towards proving \cref{thm:preg-ba}, we now show that we can define
$\eval{w^*}_p$ for any $w\in\Sigma_p$.
\begin{lemma}\label{lem:w-star}
  For any $w\in \Sigma_p^*$, $\eval{w^*}_p$ is definable by a formula
  of existential B\"uchi arithmetic $\Phi_{w^*}(x)$.
\end{lemma}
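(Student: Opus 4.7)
The plan is to express $\eval{w^n}_p$ in closed form as a geometric sum and then reduce definability of $\eval{w^*}_p$ to that of $S_U$ from the previous lemma. Set $k := |w|$ and $c := \eval{w}_p$. If $w = \epsilon$ then trivially $\Phi_{w^*}(x) \equiv (x = 0)$ works, so assume $w \neq \epsilon$, i.e., $k \geq 1$. Concatenating $n$ copies of $w$ shifts the $i$th copy by $ik$ positions, so for every $n \geq 0$,
$$\eval{w^n}_p \;=\; c \cdot \sum_{i=0}^{n-1} p^{ik} \;=\; c \cdot \frac{p^{nk} - 1}{p^k - 1},$$
with the empty sum interpreted as $0$, so that the right-hand side equals $0$ when $n = 0$.

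My next step would be to introduce an existentially quantified witness $z = p^{nk}$ and rewrite the geometric identity as a linear Diophantine equation in $x$ and $z$. The set $\{p^{nk} : n \geq 0\} = \{p^u : u \in k\N\}$ coincides with $\{z : S_{k\N}(1,z)\}$, where $k\N = \{0, k, 2k, \ldots\}$ is the ultimately periodic set representable as $(0, k, \emptyset, \{0\})$; by the preceding lemma, $S_{k\N}$ is existentially definable. I would therefore define
$$\Phi_{w^*}(x) \;\equiv\; \exists z \colon S_{k\N}(1,z) \;\wedge\; (p^k - 1) \cdot x = c \cdot z - c.$$
Since $p^k - 1$ and $c$ are fixed natural numbers depending only on $w$, and $S_{k\N}$ is existential by the previous lemma, this is genuinely a formula of existential Büchi arithmetic.

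For correctness, when $z = 1$ (the case $n = 0$) the equation $(p^k - 1) x = 0$ combined with $p^k - 1 \geq 1$ forces $x = 0 = \eval{w^0}_p$; when $z = p^{nk}$ for $n \geq 1$ it forces $x = c(p^{nk} - 1)/(p^k - 1) = \eval{w^n}_p$. The only real insight needed is the geometric-sum identity that lets one replace a universally quantified exponent $n$ by a single existentially quantified power of $p$; after that, the existential definability of $S_{k\N}$ supplied by the previous lemma does essentially all the work, and I do not anticipate any serious obstacle.
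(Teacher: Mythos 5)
Your proof is correct and follows essentially the same route as the paper: express $\eval{w^n}_p$ via the geometric-sum identity and use the Lohrey--Zetzsche predicate to existentially quantify the power $p^{nk}$ (your $S_{k\N}$ with $k=|w|$ is just $S_{|w|}$ applied to $(1,z)$). If anything, your choice of shift $p^{|w|}$ is the more careful one, since the paper's ``smallest power of $p$ greater than $\eval{w}_p$'' agrees with $p^{|w|}$ only when $w$ has no leading zeros.
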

\begin{proof}
  Let $m = p^\ell$ be the smallest power of $p$ greater than $\eval
  w_p$. Then for any $k>0$,
  \[
  \eval{w^k}_p=\eval{w}_p \cdot \sum_{i=0}^{k-1} m^i = \eval{w}_p
  \cdot \frac{m^k-1}{m-1}\,.
  \]
  It follows that $\eval{w^*}_p$ is defined by \[\Phi_{w^*}(x) :=
  x = 0 \vee
  \exists y\colon S_\ell(m,y) \wedge (m-1) \cdot x = \eval w_p \cdot
  (y-1)\,.\]\qed
\end{proof}
Building upon \cref{lem:w-star}, we now show that, for any $w \in
\Sigma_p$, we can define $\eval{w^+}_p$ shifted to the left by a
number of zeros specified by an ultimately periodic set.
\begin{lemma}
  Let $w\in \Sigma_p^*$ and $U$ be an ultimately periodic set. Then
  $\eval{w^+0^U}_p$ is definable by a formula of existential B\"uchi
  arithmetic $\Phi_{U,w^+}(x)$.
\end{lemma}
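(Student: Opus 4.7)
The plan is to reduce this to \cref{lem:w-star} by introducing auxiliary ``shift'' variables, using that multiplication by a power of $p$ is just a left shift. The starting observation is that if $\eval{w}_p > 0$ and $m = p^r$ is the smallest power of $p$ strictly greater than $\eval{w}_p$, then for any $k \geq 1$ and $u \geq 0$, the same geometric-sum identity that drove \cref{lem:w-star} gives
\[
(m-1)\cdot \eval{w^k 0^u}_p \;=\; \eval{w}_p \cdot (p^{rk+u} - p^u),
\]
which is linear in $x = \eval{w^k 0^u}_p$, an ``upper'' power $y_0 = p^{rk+u}$, and a ``lower'' power $z = p^u$, with the factorization $y_0 - z = p^u(p^{rk}-1)$ doing all the work.

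I would handle the degenerate case $\eval{w}_p = 0$ (i.e.\ $w = \varepsilon$ or $w\in 0^+$) separately: there $\eval{w^+ 0^U}_p = \{0\}$, so $\Phi_{U,w^+}(x) \equiv x=0$ suffices. For the main case I would set
\[
\Phi_{U,w^+}(x) \;\equiv\; \exists y_0\, \exists z\colon\; S_U(1,z) \wedge S_r(z,y_0) \wedge z < y_0 \wedge (m-1)\cdot x = \eval{w}_p\cdot (y_0 - z).
\]
The conjunct $S_U(1,z)$ forces $z = p^u$ with $u \in U$; the pair $S_r(z,y_0) \wedge z<y_0$ then forces $y_0 = p^{u+rk}$ for some $k\geq 1$ (using $r\ge 1$, which holds since $\eval{w}_p\ge 1$); finally the linear equation pins $x$ to $\eval{w^k 0^u}_p$. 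Since $S_U$, $S_r$ and the linear (in)equations are all existentially definable by earlier results in this section, the whole formula is existential.

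The main obstacle to watch for is the natural temptation to first define $\eval{w^+}_p$ and then ``shift'' by $p^u$, which would require multiplying two unknowns and is unavailable in B\"uchi arithmetic. The trick above sidesteps this by folding the shift into a single linear equation involving only the two power-of-$p$ unknowns $y_0$ and $z$. With the formula in hand, correctness reduces to a routine two-way check: any $x$ of the required form supplies witnesses $y_0 = p^{rk+u}$, $z = p^u$ satisfying all conjuncts; conversely, any witnesses yield $x = \eval{w}_p\cdot p^u\cdot (m^k-1)/(m-1) = \eval{w^k 0^u}_p$ with $k\ge 1$ and $u\in U$, and the inequality $z < y_0$ is precisely what promotes $w^*$ to $w^+$.
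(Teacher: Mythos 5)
Your proof is correct, but the construction is genuinely different from the paper's. The paper handles $w\in 0^*$ trivially and otherwise factors $w=w'\cdot w_0$ with $w_0\in 0^*$, defines $\eval{w^+0^*}_p$ as the set of values $p^i\cdot(z-y)$ where $y<z$ both satisfy $\Phi_{w^*}$ from \cref{lem:w-star} and $0\le i<|w|$, and then adds a separate conjunct using $V_p(t,x)$ together with $S_U$ to force the number of trailing zeros of $x$ into $U+|w_0|$. You instead fold everything into one linear equation $(m-1)\cdot x=\eval{w}_p\cdot(y_0-z)$ over two power-of-$p$ unknowns, with $S_U(1,z)$ pinning the shift $z=p^u$, $u\in U$, and $S_r(z,y_0)\wedge z<y_0$ pinning $y_0=p^{u+rk}$, $k\ge 1$. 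This buys you a visibly shorter formula: no factorization of $w$ into $w'w_0$, no $V_p$-based trailing-zero bookkeeping, and no disjunction over the residues $0\le i<|w|$; you only reuse the arithmetic identity behind \cref{lem:w-star}, not the formula $\Phi_{w^*}$ itself. What the paper's route buys in exchange is modularity: the pattern ``define the shape with $\Phi_{w^*}$, then filter the trailing zeros via $V_p$ and an $S$-predicate'' is reused almost verbatim in the induction step of \cref{prop:poly-growth-re-ba}, whereas your self-contained equation would have to be re-derived there. One shared caveat: both your identity and \cref{lem:w-star} implicitly need $m=p^{|w|}$; taking $m$ to be the smallest power of $p$ exceeding $\eval{w}_p$ only agrees with this when $w$ has no leading zeros (e.g.\ $w=01$ breaks it), so you should set $r=|w|$ rather than defining $r$ from $\eval{w}_p$ — with that reading, every step of your argument goes through.
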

\begin{proof}
  The case $w\in 0^*$ is trivial. Thus, let $w=w'\cdot w_0$ such that
  $w'\in \Sigma_p^*\cdot (\Sigma_p\setminus \{ 0 \})$ and $w_0 \in
  0^*$. Observe that for $i<j$, $\eval{w^j}_p-\eval{w^i}_p =
  \eval{w^{j-i}0^i}_p$. We define
  \begin{multline*}
    \Phi_{U,w^+}(x) := \exists y\,\exists z\colon y<z \wedge
    \Phi_{w^*}(y) \wedge \Phi_{w^*}(z) \wedge \bigvee_{0\le i < |w|} x
    = p^i \cdot (z-y)\wedge\\ \wedge \exists s\, \exists t\colon S_U(1,s)
    \wedge V_p(t,x) \wedge t = p^{|w_0|+1} \cdot s\,.
  \end{multline*}
  The first line defines the set $\eval{w^+0^*}_p$, whereas the second
  line ensures that the tailing number of zeros is in the set
  $U+|w_0|$. \qed
\end{proof}
We have now all the ingredients to prove the following key
proposition.
\begin{proposition}\label{prop:poly-growth-re-ba}
  Let $L=v_0 w_1^* v_{1} \cdots v_{k-1} w_k^* v_k$. Then $\eval L_p$
  is definable in existential B\"uchi arithmetic.
\end{proposition}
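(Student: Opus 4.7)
The plan is to existentially quantify a list of power-of-$p$ markers $y_0, \ldots, y_{2k+1}$ representing the block boundaries of a candidate factorization $u = v_0 w_1^{n_1} v_1 \cdots w_k^{n_k} v_k$. We index blocks so that block $2i$ is $v_i$ and block $2i-1$ is $w_i^{n_i}$, and take $y_j = p^{\ell_j}$ with $\ell_j$ the length of the suffix of $u$ starting at block $j$, so $y_{2k+1} = 1$. The block-length relations
\[
\ell_{2i} = |v_i| + \ell_{2i+1}, \qquad \ell_{2i-1} = n_i\,|w_i| + \ell_{2i}
\]
translate respectively into the fixed-shift equation $y_{2i} = p^{|v_i|} \cdot y_{2i+1}$ and the variable-shift predicate $S_{|w_i|}(y_{2i}, y_{2i-1})$, both expressible in existential B\"uchi arithmetic by what has been established above.

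Next we express $\eval u_p$ linearly in the markers. Decomposing $u$ blockwise gives
\[
\eval u_p = \sum_{i=0}^k \eval{v_i}_p \cdot y_{2i+1} + \sum_{i=1}^k \eval{w_i^{n_i}}_p \cdot y_{2i}.
\]
The geometric-series identity $\eval{w_i^{n_i}}_p = \eval{w_i}_p \cdot (p^{n_i |w_i|} - 1)/(p^{|w_i|}-1)$, combined with $y_{2i-1} = p^{n_i|w_i|} \cdot y_{2i}$ enforced by $S_{|w_i|}$, yields $\eval{w_i^{n_i}}_p \cdot y_{2i} = \eval{w_i}_p \cdot (y_{2i-1} - y_{2i})/(p^{|w_i|}-1)$. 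Multiplying through by $D := \prod_{j=1}^k (p^{|w_j|} - 1)$ clears denominators and produces a single linear Diophantine equation in $x$ and the $y_j$ with integer constant coefficients.

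Putting these pieces together, $\Phi_L(x)$ is the existential conjunction of $y_{2k+1} = 1$, the fixed-shift equalities, the predicates $S_{|w_i|}(y_{2i}, y_{2i-1})$, and the cleared linear equation. For correctness, any satisfying assignment recovers exponents $n_i \in \N$ witnessing a factorization with $x = \eval{v_0 w_1^{n_1} v_1 \cdots w_k^{n_k} v_k}_p$, and conversely every such factorization supplies a witness. We expect the main obstacle to be purely bookkeeping: keeping shifts and indices aligned through the telescoping identity, and separately handling the degenerate case $w_i = \varepsilon$ by preprocessing the regular expression to absorb such $w_i^*$ factors into the neighbouring $v$'s.
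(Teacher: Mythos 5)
Your proof is correct, but it takes a genuinely different route from the paper. The paper proceeds by induction on $k$, peeling off one $w^+v$ block at a time; its key ingredients are formulas $\Phi_{w^*}(x)$ defining $\eval{w^*}_p$ and $\Phi_{U,w^+}(x)$ defining $\eval{w^+0^U}_p$ for an ultimately periodic set $U$ (obtained via the difference trick $\eval{w^j}_p-\eval{w^i}_p=\eval{w^{j-i}0^{i|w|}}_p$), together with the generalized predicate $S_U$ and the fact that the set of lengths of words in the residual language is ultimately periodic; most of the work there goes into aligning shifts correctly in the presence of leading and trailing zero blocks. You instead give a single non-inductive formula: existentially quantified power-of-$p$ markers $y_j=p^{\ell_j}$ for the block boundaries, the single-modulus predicate $S_{|w_i|}$ for the variable-length blocks, and one linear Diophantine equation obtained by telescoping the geometric series and clearing the denominators $p^{|w_i|}-1$. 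This is sound because every witness assignment reconstructs exponents $n_i$ with $x=\eval{v_0w_1^{n_1}\cdots v_k}_p$, and complete because every word in $L$ supplies markers; the leading-zero subtleties that occupy the paper largely disappear since positions are carried explicitly by the markers rather than inferred from values. What your approach buys is brevity and uniformity (only $S_\ell$ for a single modulus is needed, no induction, no ultimately periodic length sets); what the paper's approach buys is a toolbox of intermediate predicates ($S_U$, $\eval{w^*}_p$, $\eval{w^+0^U}_p$) that are reusable elsewhere. The one caveat you already flag—preprocessing away $w_i=\varepsilon$ so that $p^{|w_i|}-1\neq 0$ and $S_{|w_i|}$ is well defined—is indeed necessary and suffices.
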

\begin{proof}
  The proposition follows from showing the statement for languages of
  the form \[L'=v_0 w_1^+ v_{1} \cdots v_{k-1} w_k^+ v_k\,.\] We show
  the statement by induction on $k$. The induction base case $k=0$ is
  trivial. For the induction step, assume that for $M=v_{1} w_{2}^+
  v_{2} \cdots v_{k-1} w_k^+ v_k$, $\eval{M}_p$ is defined by a
  formula $\Phi_k(x)$ of existential B\"uchi arithmetic, and let
  $v_{0},w_{1} \in \Sigma_p^*$.

  We first show how to define $N=w_1^+v_{1} w_{2}^+ v_{2} \cdots
  v_{k-1} w_k^+ v_k$. To this end, factor $M=M_0 \cdot M'$, where $M_0
  \subseteq 0^*$ and $M\subseteq (\Sigma_p \setminus \{ 0\}) \cdot
  \Sigma_p^*$. Observe that $\eval{M'}_p=\eval{\Phi_k(x)}_p$, and that
  both $U=\{ |w| : w \in M\}$ and $V=\{ |w| : w \in M_0 \}$ are
  ultimately periodic sets, cf.~\cite{Chro86,To09}. We moreover assume
  that $w_1\not\in 0^*$, otherwise we are done. Factor $w_1=w'\cdot
  w_0$ such that $w'\in \Sigma_p^* \cdot (\Sigma_p \setminus \{ 0\})$
  and $w_0\in 0^*$. Recall that $W_p(x,y)$ holds if and only if $y$ is
  the smallest power of $p$ strictly greater than $x$, and define
  \begin{multline*}
    \Psi_{k+1}(x) := \exists y\,\exists z\colon \Phi_k(y) \wedge
    \Phi_{U,w^+}(z) \wedge x = y + z \wedge\\ \wedge \exists
    s\,\exists t\colon W_p(y,s) \wedge S_V(s,t) \wedge V_p(p^{|w_0|+1}
    \cdot t,z)\,.
  \end{multline*}
  The first line composes $x$ as the sum of some $y \in \eval{M}_p$
  and $z\in \eval{w^+0^U}_p$. The second line ensures that the number of
  zeros between the leading bit of $y$ and the last non-zero digit of
  $z$ in their $p$-ary expansion is in $V+|w_0|$. Thus, $\eval
  N_p=\eval{\Psi_{k+1}(x)}$.

  We now show how to define $L'$ along similar lines. To this end,
  factor $N={N_0\cdot N'}$ such that $N_0 \subseteq 0^*$ and
  $N'\subseteq (\Sigma_p \setminus \{0 \})\cdot \Sigma_p^*$, and let
  $T=\{ |w| : w\in N_0\}$, which is an ultimately periodic set.  We
  now obtain the desired formula of existential B\"uchi arithmetic as
  \[
  \Phi_{k+1}(x) := \exists y\, \exists z\colon x = y + p\cdot z \cdot
  \eval{v_0}_p \wedge \Psi_{k+1}(y) \wedge \exists s\colon W_p(y,s) \wedge
  S_T(s,z)\,.
  \]
  \qed
\end{proof}
Since we can define any regular language of the
form~\eqref{eqn:poly-growth-re} in existential B\"uchi arithmetic via
\cref{prop:poly-growth-re-ba}, we can define a finite union of such
languages and thus define all regular languages of polynomial growth
in existential B\"uchi arithmetic. This completes the proof of
\cref{thm:preg-ba}.

Note that $\mathbf{PREG}_p\not\subseteq \mathbf{PA}$ for any base
$p\ge 2$: since $M=\eval{\Phi(x)}$ is ultimately periodic for any
formula $\Phi(x)$ of Presburger arithmetic, whenever $\eval{\Phi(x)}$
is infinite it follows that $d_M(n)=\Omega(p^n)$, i.e., not of
polynomial growth.

\section{Conclusion}
The main result of this paper is that existential B\"uchi arithmetic
is strictly less expressive than full B\"uchi arithmetic of any base.
This is in contrast to Presburger arithmetic, for which it is known
that its existential fragment is expressively complete.

When considered as the first-order theory of the structure $\langle
\mathbb N,0,1,+ \rangle$, Presburger arithmetic does not have a
quantifier elimination procedure. The extended structure $\langle
\mathbb N,0,1,+,\{c|\cdot\}_{c>1} \rangle$, however, admits quantifier
elimination. Those additional divisibility predicates are definable in
existential Presburger arithmetic. Our main result shows that even if
we extended the structure underlying B\"uchi arithmetic with
predicates definable in existential B\"uchi arithmetic, the resulting
first-order theory would not admit quantifier-elimination. On the
positive side, Benedikt et al.~\cite[Thm.~3.1]{BLSS03} give an
extension of B\"uchi arithmetic which has quantifier elimination.

We conclude this paper with an interesting yet likely challenging open
problem: Is it decidable whether a set definable in B\"uchi arithmetic
is definable in existential B\"uchi arithmetic?

\subsubsection*{Acknowledgments.}
We would like to thank Dmitry Chistikov and Alex Fung for inspiring
discussions on the topics of this paper, and the FoSSaCS'21 reviewers
for their comments and suggestions.

\begin{wrapfigure}{r}{20pt}
  \vspace*{-0.88cm}
  \includegraphics[width=20pt]{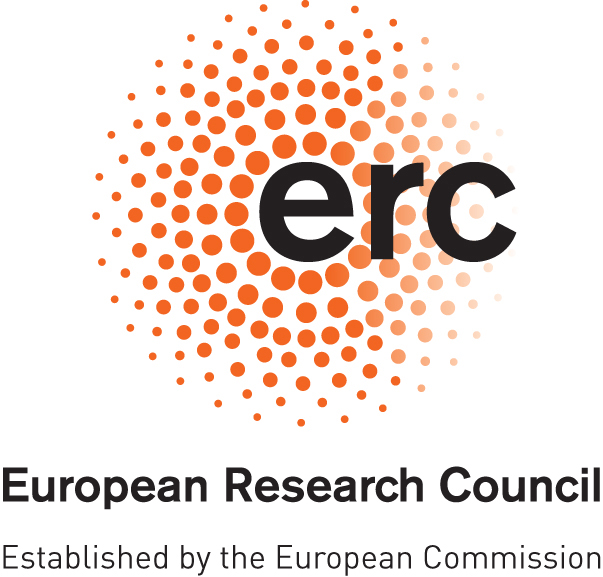}\\
  \includegraphics[width=20pt]{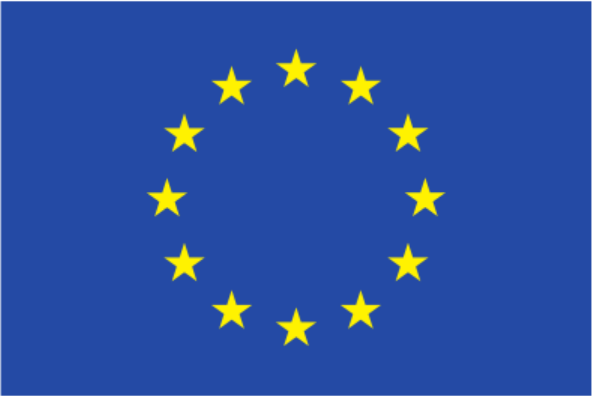}
\end{wrapfigure}

This work is part of a project that has received funding from the
European Research Council (ERC) under the European Union's Horizon
2020 research and innovation programme (Grant agreement No.\ 852769,
ARiAT).

\bibliographystyle{splncs04}
\bibliography{bibliography}

\vfill
 
{\small\medskip\noindent{\bf Open Access} This chapter is licensed under the terms of the Creative Commons\break Attribution 4.0 International License (\url{http://creativecommons.org/licenses/by/4.0/}), which permits use, sharing, adaptation, distribution and reproduction in any medium or format, as long as you give appropriate credit to the original author(s) and the source, provide a link to the Creative Commons license and indicate if changes were made.}
 
{\small \spaceskip .28em plus .1em minus .1em The images or other third party material in this chapter are included in the chapter's Creative Commons license, unless indicated otherwise in a credit line to the material.~If material is not included in the chapter's Creative Commons license and your intended\break use is not permitted by statutory regulation or exceeds the permitted use, you will need to obtain permission directly from the copyright holder.}
 
\medskip\noindent\includegraphics{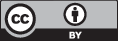}

\end{document}